\DeclareMathOperator{\diag}{diag}
\newcommand{\ie}{{\it i.e.}}
\newcommand{\F}{\mathbb{F}}
\newcommand{\N}{{\mathbb N}}
\newcommand{\rank}    {{\rm rank }}
\newcommand{\C}{\mathcal{C}}
\newcommand{\drfree}{\mbox{$d_{\mbox{\rm\tiny SR}}$}}
\newcommand{\dcrfree}{\mbox{$d^j_{\mbox{\rm\tiny SR}}$}}  
\newcommand{\drank}{\mbox{$d_{\mbox{\rm\tiny R}}$}}
\newcommand{\dsumrank}{\mbox{$d_{\mbox{\rm\tiny SR}}$}}
\newcommand{\im}       {{\rm im}}
\newcommand{\sgn}{\rm sgn}
\newcommand{\XXX}[2]{
  \begin{bmatrix}
    #1_0 & #1_1 &  \cdots & #1_#2 \\
     & #1_0  & \cdots & #1_{#2-1} \\
     &  & \ddots & \vdots \\
     &  &  & #1_0 \\
  \end{bmatrix}}
\theoremstyle{definition}
\newtheorem{definition}{Definition}
\theoremstyle{plain}
\newtheorem{theorem}{Theorem}
\newtheorem{lemma}[definition]{Lemma}
\newtheorem{obs}{Observation}
\begin{document}

\begin{frontmatter}



\title{Systematic Maximum Sum Rank Codes}
\author{Paulo Almeida, Umberto Mart{\'i}nez-Pe\~{n}as and Diego Napp }

\address{ Paulo Almeida
Dept.\ of Mathematics, University of Aveiro, Portugal, palmeida@ua.pt \\
Umberto Mart{\'i}nez-Pe\~{n}as, Dept.\ of Electrical \& Computer Engineering,
University of Toronto, Canadaumberto@ece.utoronto.ca\\
Diego Napp,  Dept.\ of Mathematics, University of  Alicante, Spain, diego.napp@ua.es \\
}

\begin{abstract}
In the last decade there has been a great interest in extending results for codes equipped with the Hamming metric to analogous results for codes endowed with the rank metric. This work follows this thread of research and studies the characterization of systematic generator matrices (encoders) of codes with maximum rank distance. In the context of Hamming distance these codes are the so-called Maximum Distance Separable (MDS) codes and systematic encoders have been fully investigated. In this paper we investigate the algebraic properties and representation of encoders in systematic form of Maximum Rank Distance (MRD) codes and Maximum Sum Rank Distance (MSRD) codes. We address both block codes and convolutional codes separately and present necessary and sufficient conditions for an  encoder in systematic form to generate a code with maximum (sum) rank distance. These characterizations are given in terms of certain matrices that must be superregular in a extension field and that preserve superregularity after some transformations performed over the base field. We conclude the work presenting some examples of Maximum Sum Rank convolutional codes over small fields. For the given parameters the examples obtained are over smaller fields than the examples obtained by other authors.
\end{abstract}

\begin{keyword}
Maximum Rank Distance, Maximum Sum Rank Distance, Convolutional codes, Superregular matrices, Gabidulin codes



\end{keyword}

\end{frontmatter}

\section{Introduction} \label{sec intro}

Maximum Distance Separable (MDS) codes are block codes whose minimum Hamming distance attains the Singleton bound. In the linear case, they are characterized by having a generator matrix $G\in \F_q^{k \times n}$ whose full size $k \times k$ minors are all nonzero, where $k<n$ and $\F_q$ is a finite field. If $ G $ is in systematic form, \ie, $ G = [I_k \ P] $, where $I_k$ is the identity matrix of size $k$, then, the MDS property can be characterized only in terms of the matrix $P$, namely, all minors (of any size) of $P$ must be nonzero. One well-known example of this class of matrices is the set of generalized Cauchy matrices \cite{BlaumRoth}, which correspond to systematic generator matrices of Reed-Solomon codes \cite{ma77}.

In the last decade, rank-metric codes have been a very active area of research due to their wide range of applications in reliable and secure linear network coding \cite{KoetKsch08}, postquantum cryptography \cite{Horlemann-Trautmann2018}, local repair in distributed storage \cite{rawat} and space-time coding \cite{space-time-kumar}. Block codes attaining the Singleton bound for the rank metric are called maximum rank distance (MRD) codes. Unfortunately, if one represents such codes as subsets of $ \F_{q^M}^n $ with ranks defined over $ \F_q $, then all known MRD constructions (e.g., \cite{Gabidulin85}) are only decodable in superlinear time in $ n $ over $ \F_{q^M} $, where $ M \geq n $ (thus $ q^M $ is exponential in $ n $), and achieving linear-time decoding is already an extremely hard problem even in the Hamming metric. Thus it seems that MRD block codes will hardly ever be practical in real applications.

Recently, the sum-rank metric, which simultaneously extends the Hamming and rank metrics, has gained interest in the area. It was implicitly considered for multiple fading blocks in space-time coding \cite[Sec. III]{space-time-kumar}, and then formally introduced for multishot network coding \cite{ma16,secure-multishot,NappPintoVetRos2017,WachStinSido15}. The sum-rank metric, in the form of column rank distances, is the natural metric for convolutional codes tailored for streaming over linearly coded networks \cite{NappPintoVetRos2017,ma16}. The problem of constructing convolutional codes with maximum column (Hamming or rank) distances has attracted a lot of attention in recent years \cite{al16,BaYt2018,gl03,HaOs2018,Lieb2018}. In the Hamming context, it was shown in \cite{gl03} that the construction of convolutional codes with optimal column distances boils down to the construction of lower (block) triangular Toeplitz superregular matrices (see the formal definition in Section \ref{sec:preliminaries}). Several results on superregular matrices have been recently presented in \cite{ al16,BaYt2018,HaOs2018,Lieb2018,NaRo2015}. However, over small fields only constructions of superregular matrices with small parameters have been presented, most of them found by computer search. General constructions have also been presented, but require unpractical large finite fields \cite{AlmeidaNappPinto2013,gl03}, e.g., doubly exponential.

On the other hand, block codes equipped with the sum-rank metric are also of interest for reliable and secure multishot network coding \cite{secure-multishot} and especially for local repair in distributed storage \cite{lrc-sum-rank}. This is due to linearized Reed-Solomon codes \cite{MARTINEZPENAS2018587}, which are the only known MSRD (maximum sum-rank distance) block codes with subexponential field sizes for decoding, in contrast with MRD block codes which are practical only for moderate parameters (see \cite[Table I]{secure-multishot} and \cite[Section VI]{lrc-sum-rank}).

In this work, we give sufficient and necessary conditions for linear block and convolutional codes to be  MSRD only in terms of the matrix $ P\in \F_{q^M}^{k \times (n-k)} $ (resp. $ P(D) \in \F_{q^M}^{k \times (n-k)}[D] $), when the generator matrix of the code is in systematic form, \ie, $[ I_k \ \ P]$ (resp. $[ I_k \ \ P(D) ]$). For consistency with the convolutional literature, we will use the terms encoder and generator matrix interchangeably. These conditions will require not only that all the (non-trivial) minors of $P$ (resp. $ P(D)$) are nonzero, but that they remain nonzero after some operations over the base field $\F_q$. Worth mentioning  is the thorough work in \cite{Neri18,Neri_Thesis} on systematic  encoders of block MRD codes and the family of generalized Gabidulin codes. This class of codes are the rank analogue of the generalized Reed-Solomon codes and their nonsystematic encoders are given by $s$-Moore matrices, the $q$-analogues of weighted Vandermonde matrices. We note that  MRD codes are $q$-analogues of MDS codes, but MSRD codes are not. 

The outline of this paper is as follows. In Section 2, we present fundamental preliminary results on the structure of rank codes, convolutional codes and superregular matrices. In Section 3, we address and present first a matrix characterization for a systematic block code to be MRD and MSRD. We then proceed to tackle the more involved characterization of systematic convolutional codes with optimal column rank distances. We also address the general case of a convolutional code that does not necessarily admit a systematic polynomial encoder. The last section is devoted to present concrete examples of optimal codes over relatively small field sizes which improve the existing examples in the literature.

\section{Preliminaries}\label{sec:preliminaries}

In this section, we present the setting and necessary results to address the problems in the remainder of the paper.

\subsection{Block codes}

A block code is simply a nonempty subset $ \C \subseteq \F_{q^M}^n $, which we will consider to be $ \F_{q^M} $-linear from now on. In case its dimension is $ k $, we call it an $ (n,k) $ code. Let $ M_n : \F_{q^M}^n \longrightarrow \F_q^{M \times n} $ denote the $ \F_q $-linear vector space isomorphism that expands every scalar in $ \F_{q^M} $ as a column vector in $ \F_q^M $, with respect to some basis. Then we may define the rank metric in $ \F_{q^M}^n $ by $ \drank(v,w) = \rank (v - w) $ (see \cite{Gabidulin85}), where $ \rank(v) = \rank(M_n(v)) $, for all $ v,w \in \mathbb{F}_{q^M}^n $. In this context, codes are sometimes considered as subsets of $ \F_q^{M \times n} $ to use matrix operations or to restrict the study to $ \F_q $-linear codes.

The rank metric admits a natural extension, called sum-rank metric. If we partition the code length $ n = n_1 + n_2 + \cdots + n_\ell $, then we may define the corresponding sum-rank metric (see \cite{space-time-kumar,MARTINEZPENAS2018587,gsrws,NappPintoVetRos2017,WachStinSido15}) as
\begin{equation}
\dsumrank (v,w) = \sum_{i=1}^\ell \drank (v_i, w_i) = \sum_{i=1}^\ell \rank(M_{n_i}(v_i - w_i)),
\label{eq sum-rank metric definition}
\end{equation}
for all $ v = (v_1, v_2, \ldots, v_\ell) \in \F_{q^M}^n $ and $ w = (w_1, w_2, \ldots, w_\ell) \in \F_{q^M}^n $, where $ v_i, w_i \in \mathbb{F}_{q^M}^{n_i} $, for $ i = 1,2, \ldots, \ell $.

The sum-rank metric measures the error and erasure correction capabilities of codes in multishot matrix-multiplicative channels, e.g., multishot network coding \cite{secure-multishot,WachStinSido15}, space-time coding with multiple fading blocks \cite{space-time-kumar} and local repair with multiple local groups \cite{lrc-sum-rank}.

Not surprisingly, the rank metric is recovered by setting $ \ell = 1 $, and the classical Hamming metric is recovered by taking $ n_1 = n_2 = \ldots = n_\ell = 1 $ (or $ \ell = n $).

The minimum rank distance of a code $ \C \subseteq \F_{q^M}^n $ is defined as
$$ \drank (\C) = \min \{ \drank (v,w) \mid v,w \in \C, v \neq w \} = \min \{ \rank (v) \mid v \in \C, v \neq 0 \}, $$
and analogously for the sum-rank metric $\dsumrank (\C)$. For an $ (n,k) $ code $ \C \subseteq \F_{q^M}^n $, a \textit{generator matrix} or \emph{encoder} is a full-rank matrix $G\in \F_{q^M}^{k \times n}$ such that
$${\mathcal C}  =  \im_{\F_{q^M}}G = \left\{ uG\ \ | \ \, u \in \F_{q^M}^{k}\right\}. $$
An encoder of $ \C $ is called systematic if it is of the form $ G = [I_k, P] $, for some matrix $ P \in \F_{q^M}^{k \times (n-k)} $, where $ I_k \in \F_{q^M}^{k \times k} $ denotes the identity matrix of size $ k $. Note that, by basic linear algebra, any block code has a unique systematic encoder (up to permutation of columns).

For an $(n,k)$ code over $ \C \subseteq \F_{q^M}^n $, the analogues of the Singleton bound are given by
\begin{equation}
\drank (\C) \leq \min \left\lbrace 1, \frac{M}{n} \right\rbrace (n - k) + 1 \quad \textrm{and} \quad \dsumrank (\C) \leq \min \left\lbrace 1, \frac{\ell M}{n} \right\rbrace (n - k) + 1,
\label{eq singleton bounds block analogues}
\end{equation}
for the rank and sum-rank metrics, respectively, being the second valid for $ n_1 = n_2 = \ldots = n_\ell $. The first was proven in \cite{Gabidulin85}, whereas the second was proven in \cite{lrc-sum-rank}.

The bounds in (\ref{eq singleton bounds block analogues}) are refinements of the information-theoretical classical bound
\begin{equation}
\drank (\C) \leq n - k + 1 \quad \textrm{and} \quad \dsumrank (\C) \leq n - k + 1,
\label{eq singleton bounds block}
\end{equation}
respectively. We will say that $ \C $ is MRD (maximum rank distance) or MSRD (maximum sum-rank distance) if it attains the bounds in (\ref{eq singleton bounds block}), respectively. The bounds in (\ref{eq singleton bounds block analogues}) imply that MRD codes (resp. MSRD codes) only exist if $ M \geq n $ (resp. $ M \geq n / \ell $). Note how (\ref{eq singleton bounds block analogues}) and (\ref{eq singleton bounds block}) coincide for the Hamming metric ($ \ell = n $) and the restriction on the extension degree $ M $ for the existence of MDS codes vanishes.

Gabidulin codes are a well-known class of MRD codes \cite{Gabidulin85} (see also \cite{tr16,OtOz2018}). It is worth noting that $ \drank(\C) \leq \dsumrank(\C) $, thus any MRD code is also MSRD. However, as noted above, MRD codes only exist if $ M \geq n $. Linearized Reed-Solomon codes \cite{MARTINEZPENAS2018587} are the only known MSRD codes with subexponential field sizes $ q^M $, and achieve the minimum extension degree $ M = n / \ell $ whenever $ n_1 = n_2 = \ldots = n_\ell $. Since the fastest decoding algorithms for MRD codes are superlinear in $ n $ over $ \mathbb{F}_{q^M} $, with $ M \geq n $, the known decoding algorithm for linearized Reed-Solomon codes that is quadratic in $ n $ over $ \F_{q^M} $, with $ M = n / \ell $, is more than a degree faster in XOR operations in multishot channels with $ \ell \gg 1 $ (see \cite[Table I]{secure-multishot}), which are the practical cases (e.g., \cite{space-time-kumar, lrc-sum-rank, secure-multishot}).


We shall provide necessary and sufficient conditions for a systematic encoder to be MRD. First, we recall a characterization for encoders not necessarily in systematic form. The result is a variant of \cite[Theorem 1]{Gabidulin85} and was explicitly presented in \cite[Theorem 3.2 and Corollary 3.3]{tr16} using the Bruhat decomposition for matrices. It is also an immediate consequence of the more general results \cite[Theorem 2]{umberto16} or \cite[Theorem 6]{umberto16}.

\begin{theorem}[\cite{tr16}]\label{th:gabidulin}
	Let $G\in \F_{q^M}^{k \times n}$ be an encoder of $\C$. Then, the following statements are equivalent.
	\begin{enumerate}
		\item $\C$ is MRD;
		\item all the full size minors of $GA$ are nonzero for all nonsingular matrices $A\in \F_q^{n \times n}$;
		\item  all the full size minors of $GU$ are nonzero for all nonsingular upper triangular matrices $U\in \F_q^{n \times n}$.
	\end{enumerate}
\end{theorem}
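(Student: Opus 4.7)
The plan is to prove the cycle $(1) \Rightarrow (2) \Rightarrow (3) \Rightarrow (1)$. The implication $(2) \Rightarrow (3)$ is immediate, since every invertible upper triangular matrix over $\F_q$ is in particular a nonsingular element of $\F_q^{n \times n}$.

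For $(1) \Rightarrow (2)$ I argue by contrapositive. Suppose some full-size minor of $GA$ vanishes, say the one indexed by columns $S \subseteq \{1,\ldots,n\}$ with $|S|=k$. Then there exists a nonzero $u \in \F_{q^M}^k$ with $u\,(GA)_{\cdot,S} = 0$. Setting $v = uG \in \C$, which is nonzero because $G$ has full row rank, one sees that $vA$ has zero entries in the $k$ positions indexed by $S$. Because $A$ is $\F_q$-invertible, multiplication by $A$ preserves the $\F_q$-span of the coordinates of a vector in $\F_{q^M}^n$, so $\rank(v) = \rank(vA) \leq n-k$, contradicting $\drank(\C) \geq n-k+1$.

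For $(3) \Rightarrow (1)$ I again argue by contrapositive, but now I must manufacture an explicit upper triangular witness. Suppose there is a nonzero codeword $v = uG \in \C$ with $\rank(v) = r \leq n-k$. Scanning the coordinates $v_1,\ldots,v_n$ from left to right, let $p_1 < p_2 < \cdots < p_r$ be the \emph{pivot} indices at which $v_{p_i}$ is $\F_q$-independent from $v_1,\ldots,v_{p_i-1}$; at every non-pivot index $j$ one may write $v_j = \sum_{p_i < j} \alpha_{i,j}\, v_{p_i}$ with $\alpha_{i,j} \in \F_q$. I then define $U \in \F_q^{n \times n}$ by placing $1$ on every diagonal entry, $-\alpha_{i,j}$ in row $p_i$ and column $j$ at every non-pivot $j$ with pivot $p_i < j$, and $0$ elsewhere. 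By construction $U$ is unit upper triangular, hence invertible, and a direct calculation yields $(vU)_j = 0$ at every non-pivot index. Since there are $n-r \geq k$ such indices, picking any $k$ of them to form $S$ gives $u\,(GU)_{\cdot,S} = 0$ with $u \neq 0$, forcing the corresponding $k \times k$ minor of $GU$ to vanish and contradicting $(3)$.

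The delicate step is $(3) \Rightarrow (1)$: the main obstacle is to convert the abstract $\F_q$-linear dependency witnessing $\rank(v) \leq n-k$ into a concrete right multiplication by an invertible upper triangular matrix, since not every $k$-dimensional $\F_q$-subspace of $\F_q^n$ can be reached by selecting $k$ columns of some $U \in \F_q^{n\times n}$ upper triangular invertible. The echelon-style pivot/non-pivot decomposition resolves this because the dependency needed at every non-pivot index $j$ involves only earlier pivot indices, which is exactly the support pattern compatible with a column of an upper triangular matrix.
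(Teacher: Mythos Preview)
Your proof is correct. The paper itself does not give a full proof of this theorem (it is cited from \cite{tr16}); it only sketches the implication $(3)\Rightarrow(2)$, using the Bruhat-type factorization $A=VQU$ with $V,U$ upper triangular and $Q$ a permutation, so that nonsingularity of all full-size minors of $GU$ for every upper triangular $U$ forces the same for $GA$.

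Your route is genuinely different from this. Instead of going through $(3)\Rightarrow(2)$ via Bruhat decomposition, you close the cycle by proving $(3)\Rightarrow(1)$ directly: given a low-rank codeword $v$, you build an explicit unit upper triangular $U$ by a left-to-right echelon sweep so that $vU$ acquires $n-r\geq k$ zero coordinates. This is more elementary---no matrix factorization lemma is needed---and more constructive, since it produces in one shot the upper triangular witness and the set $S$ of columns giving the vanishing minor. The Bruhat argument, on the other hand, is structurally cleaner for $(3)\Leftrightarrow(2)$ and is what the paper actually reuses later (e.g., in the proof of Theorem~\ref{th:main}), where one must pass from arbitrary nonsingular $A_i$ to upper triangular ones. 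Both approaches are valid; yours is self-contained, while the paper's sketch leverages a decomposition that pays off elsewhere in the paper.
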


The main idea behind the proof of condition 3. implies condition 2. above is that every nonsingular matrix $A$ can be written as $A=VQU$ where $V$ and $U$ are upper triangular and $Q$ is a permutation matrix (see, for example, \cite{tyrty}). 
This decomposition is a consequence of the more general Bruhat decomposition for algebraic groups. Since we use this decomposition frequently throughout the paper, we include its proof below for the sake of completeness.

\begin{lemma}
Let $\mathbb{F}$ be a field, let $n$ be a positive integer and let $A\in\mathbb{F}^{n\times n}$ be a nonsingular matrix. Then there exist nonsingular upper triangular matrices $U,V\in\mathbb{F}^{n\times n}$ such that
\[A=VQU,\]
where $Q$ is a permutation matrix.
\end{lemma}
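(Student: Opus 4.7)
The plan is to reduce $A$ to a permutation matrix by left- and right-multiplications by nonsingular upper triangular matrices, and then read off the desired decomposition by inversion. The key observation is that each of the following elementary row/column operations is realized by multiplication by a nonsingular upper triangular matrix: (i) adding a scalar multiple of a row $j$ to a row $i$ with $i<j$, i.e.\ from below to above, realized by left multiplication; (ii) adding a scalar multiple of a column $i$ to a column $j$ with $i<j$, i.e.\ from the left to the right, realized by right multiplication; and (iii) scaling a row or column by a nonzero scalar. Since products and inverses of nonsingular upper triangular matrices are again of the same type, it suffices to produce a sequence of operations of the forms (i)--(iii) that reduces $A$ to a permutation matrix.

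Processing the rows from the bottom up, I would maintain the following invariant after step $i$ (for $i$ decreasing from $n$ down to $1$): there exist distinct indices $j_n,j_{n-1},\ldots,j_i\in\{1,\ldots,n\}$ such that, in the current matrix, each processed row $r\in\{i,i+1,\ldots,n\}$ has its unique nonzero entry equal to $1$ and located in column $j_r$, while each processed column $j_r$ has that same unique nonzero entry. To execute step $i$, observe that the current matrix is still nonsingular (the operations preserve nonsingularity), which together with the invariant forces row $i$ to have some nonzero entry in a column outside $\{j_{i+1},\ldots,j_n\}$---otherwise row $i$ would be identically zero. Let $j_i$ be the leftmost such column. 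Apply operation (i) to eliminate every entry above position $(i,j_i)$ in column $j_i$; this affects only rows above $i$, so the processed rows below $i$ are untouched. The pivot column $j_i$ now has a unique nonzero at $(i,j_i)$, because its entries in processed rows $r>i$ were zero by the invariant. Next, apply operation (ii) to eliminate every entry of row $i$ to the right of $j_i$; the entries of row $i$ to the left of $j_i$ are already zero (those in processed columns by the invariant, those in unprocessed columns by the leftmost choice of $j_i$), and since the pivot column has a unique nonzero, the operation only affects row $i$. A final application of operation (iii) scales the pivot to $1$, restoring the invariant at step $i-1$.

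After step $1$ the matrix equals some permutation matrix $Q$, giving an equation $\mathcal{V}\,A\,\mathcal{U}=Q$ in which $\mathcal{V}$ and $\mathcal{U}$ are the accumulated nonsingular upper triangular products on the left and right respectively. Setting $V=\mathcal{V}^{-1}$ and $U=\mathcal{U}^{-1}$ yields the desired factorization $A=VQU$. The main delicate point is that the pivot must be chosen so that the subsequent row- and column-clearing operations remain upper triangular while not disturbing already-processed rows and columns; this is what necessitates both the leftmost-in-unprocessed-columns choice of $j_i$ (so that zeroing row $i$ only requires adding column $j_i$ to columns strictly to its right) and the row-by-row processing from the bottom up (so that zeroing column $j_i$ only requires adding row $i$ to rows strictly above it). Both conditions are guaranteed by the invariant, and the nonsingularity of $A$ ensures the existence of a valid pivot at every step.
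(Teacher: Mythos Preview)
Your argument is correct. The bottom-up pivoting scheme works exactly as you describe: the invariant guarantees that row $i$ is already zero in all previously processed columns, the leftmost choice of $j_i$ guarantees it is zero in all unprocessed columns to the left of $j_i$, and after the upward clearing step column $j_i$ has a single nonzero entry, so the rightward column operations affect only row $i$ and leave the invariant intact.

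The paper takes a different route. It first passes to $B=P_nA$, where $P_n$ is the anti-diagonal permutation, invokes an $LPU$ factorization $B=LPU$ with $L$ lower triangular (attributed to Gauss--Jordan elimination), and then observes that $V=P_nLP_n$ is upper triangular, giving $A=V(P_nP)U$. So the paper reduces the statement to the standard lower--permutation--upper decomposition and a conjugation trick, whereas you build the upper--permutation--upper factorization directly by an explicit pivoting algorithm. Your approach is longer but self-contained and does not rely on the $LPU$ decomposition as a black box; the paper's approach is shorter once that decomposition is granted, though it leaves the details of obtaining $LPU$ to the reader.
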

\begin{proof}
Let $B=P_nA$, where $P_n$ is the permutation matrix with ones in the anti-diagonal. Using Gauss-Jordan elimination we obtain $L^{-1}BU^{-1}=P$, so $B=LPU$, where $L\in\mathbb{F}^{n\times n}$ is a lower triangular matrix, $U\in\mathbb{F}^{n\times n}$ is an upper triangular matrix and $P$ is a permutation matrix. Since $|A|\neq 0$ then $L$ and $U$ are nonsingular. Let $V=P_nLP_n$, then clearly $V$ is nonsingular and upper triangular, also $L=P_nVP_n$. Therefore,
\[A=P_nB=P_nLPU=VQU, \]
where $Q=P_nP$.
\end{proof}

The following result extends Theorem \ref{th:gabidulin} to the sum-rank metric in general. This result was explicitly stated in a more general form in \cite[Proposition 7]{gsrws}, but was already observed in the proof of \cite[Theorem 3]{MARTINEZPENAS2018587}.

\begin{theorem}[\cite{gsrws}]\label{th MSRD characterization}
	Let $ n = n_1 + n_2 + \cdots + n_\ell $ be a code-length partition defining the sum-rank metric as in (\ref{eq sum-rank metric definition}). Let $G\in \F_{q^M}^{k \times n}$ be an encoder of an $ (n,k) $ code $\C \subseteq \F_{q^M}^n $. Then, $\C$ is MSRD if, and only if, all the full size minors of $GA$ are nonzero for all nonsingular block-diagonal matrices
	\begin{equation}
	A = {\rm diag}(A_1, A_2, \dots, A_\ell)= \left[
	\begin{array}{cccc}
	A_1 &  &  &  \\
	& A_2 &  &  \\
	&  & \ddots&  \\
	&  &  & A_\ell \\
	\end{array}
	\right] \in \F_q^{n \times n} ,
	\label{eq def block diagonal matrix}
	\end{equation}
	where $ A_i \in \F_q^{n_i \times n_i} $, for $ i = 1,2, \ldots, \ell $.
\end{theorem}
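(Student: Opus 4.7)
The plan is to reduce the sum-rank characterization to the classical Hamming MDS criterion via a per-codeword identity that converts sum-rank weight into Hamming weight through block-diagonal $\F_q$-linear changes of coordinates acting on the right.

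First I would prove that for every $v = (v_1, \ldots, v_\ell) \in \F_{q^M}^n$ with $v_i \in \F_{q^M}^{n_i}$,
\[ \dsumrank(v) = \min \{ \wt_H(vA) : A = \diag(A_1, \ldots, A_\ell), \ A_i \in \F_q^{n_i \times n_i} \text{ nonsingular} \}. \]
It is enough to check the single-block statement $\rank(v_i) = \min_{A_i} \wt_H(v_i A_i)$. Setting $r_i = \rank(v_i) = \rank(M_{n_i}(v_i))$, the right kernel of $M_{n_i}(v_i)$ has $\F_q$-dimension $n_i - r_i$; choosing any nonsingular $A_i \in \F_q^{n_i \times n_i}$ whose last $n_i - r_i$ columns span this kernel makes $M_{n_i}(v_i)A_i$ have at most $r_i$ nonzero columns, giving $\wt_H(v_i A_i) \leq r_i$. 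Conversely, multiplication by a nonsingular $A_i$ preserves rank, so $M_{n_i}(v_i)A_i$ has rank $r_i$ and hence at least $r_i$ nonzero columns. Summing over the $\ell$ blocks and using $vA = (v_1 A_1, \ldots, v_\ell A_\ell)$ for block-diagonal $A$ yields the global identity.

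Taking the minimum over nonzero $v \in \C$ on both sides, and noting that $v \mapsto vA$ is a bijection between $\C$ and $\C A = \im_{\F_{q^M}}(GA)$, I obtain $\dsumrank(\C) = \min_A \dfree(\C A)$, where $A$ ranges over the nonsingular block-diagonal matrices described in the statement. Since the Singleton bounds give $\dsumrank(\C) \leq n-k+1$ and $\dfree(\C A) \leq n-k+1$ for every such $A$, the code $\C$ is MSRD iff $\dfree(\C A) = n-k+1$ for every $A$, i.e., iff $\C A$ is MDS for every $A$. The classical criterion that a linear code is MDS iff all full-size minors of any of its encoders are nonzero, applied to the encoder $GA$ of $\C A$, then gives exactly the stated equivalence.

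The only non-routine step is the single-block rank-to-Hamming identity, and within it the achievability direction, where one must exhibit a nonsingular $A_i \in \F_q^{n_i \times n_i}$ whose trailing columns span $\ker M_{n_i}(v_i)$. Everything else assembles from the Singleton bound, the MDS minor criterion, and the observation that block-diagonal right-multiplications respect exactly the partition $n = n_1 + \cdots + n_\ell$ that defines the sum-rank metric.
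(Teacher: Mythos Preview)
The paper does not supply its own proof of this theorem: it is quoted from \cite{gsrws} (and noted as implicit in the proof of \cite[Theorem 3]{MARTINEZPENAS2018587}), so there is no in-paper argument to compare against.

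Your proof is correct and is in fact the standard route. The identity $\dsumrank(v)=\min_A \wt_H(vA)$, with $A$ ranging over nonsingular block-diagonal matrices in $\F_q^{n\times n}$, is the key device behind all such characterizations; your single-block verification via $M_{n_i}(v_iA_i)=M_{n_i}(v_i)A_i$ is the clean way to obtain it, and the passage from the per-vector identity to $\dsumrank(\C)=\min_A \dfree(\C A)$ by interchanging the two minima is valid because the blocks $A_i$ can be chosen independently. The final step---$\C$ is MSRD iff each $\C A$ is MDS iff every full-size minor of $GA$ is nonzero---is exactly Lemma~\ref{lem:MDSblock}. One cosmetic remark: in your achievability step you can equally well complete the $n_i-r_i$ kernel vectors to a basis of $\F_q^{n_i}$ in any position, not just the trailing columns; the argument only needs that some $n_i-r_i$ columns of $A_i$ span $\ker M_{n_i}(v_i)$.
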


\subsection{Convolutional codes}

As opposed to block codes, convolutional encoders process an input stream of information bits over a shift register (with possible feedback) and converts it into a stream of transmitted bits. Therefore, they are very suitable for streaming applications \cite{gl03,ma15a,ma16,mc98}. In the sequel, we follow the module-theoretic approach to convolutional codes as it was described in \cite{gl03,Hutchinson2008,OuNaPiTo19,Lieb2018,ro99a1}. A \textit{convolutional code} ${\C}$ of rate $k/n$, or $ (n,k) $ convolutional code, is an $\F_{q^M}[D]$-submodule of $\F_{q^M}^n[D]$ of rank $k$. A full-row-rank matrix $G(D)\in \F_{q^M}^{k \times n}[D]$ with the property that
$${\mathcal C}  =  \im_{\F_{q^M}[D]}G(D) = \left\{ u(D)G(D)\ \ | \ \, u(D) \in \F_{q^M}^{k}[D]\right\} $$
 is called a \textit{generator matrix} or \emph{encoder} for $\C$. 
 When the code ${\mathcal C}$ admits an encoder in systematic form, \ie, $G(D)=[I_k \ \ P(D)]$, for some $ P(D) \in \F_{q^M}^{k \times (n-k)}[D] $, we say that ${\mathcal C}$ is systematic. Note that, as opposed to block codes, not all convolutional codes admit an encoder in systematic form, not even after column permutation. We will assume that the encoder  $G(D)$ is \textit{basic} \ie, it has a polynomial right inverse.

Write $v(D)=v_{0}+v_{1}D+\cdots+v_{\ell}D^{\ell} \in \F_{q^M}^n[D] $, and represent $G(D)$ as a matrix polynomial,
\[
G(D)=G_{0}+G_{1}D+\cdots+G_{m}D^{m},
\]
where $G_{m} \neq 0$ and $G_{i}=0$, for $i > m$. Then we call $m$ the memory of $G(D)$, and for $j=0,1,\dots, m$ we define the code's $ j $th \textit{truncated sliding generator matrix} as
\begin{equation}\label{eq:Gtrunc}
G_j^c=\XXX{G}{j}.
\end{equation}
The truncated codeword can then be represented as
\begin{eqnarray}
  v _{[0,j]}=(v_{0} ,    v_{1} ,    \ldots,  v_{j}) = (u_{0} ,    u_{1} ,    \ldots,  u_{j})  \ G_j^c \in \F_{q^M}^{(j+1) n} ,
\end{eqnarray}
where, for $ t = 0,1, \ldots, j $, it holds that
$$ v_t = \sum_{i=0}^{\ell} u_{t-i}G_i \in \F_{q^M}^n . $$

Observe that, representing truncated codewords in such a way, we may naturally endow them with the sum-rank metric as in the case of block codes. As usual, we will consider the rank metric in each block of $ n $ coordinates. Then, for $ v(D) = \sum_i v_i D^i\in \F_{q^M}^n[D] $ and $w(D) = \sum_i w_i D^i \in \F_{q^M}^n[D] $, we define as in the previous subsection,
$$ \dsumrank (v_{[0,j]} , w_{[0,j]} ) = \sum_{i=0}^j \drank (v_i, w_i) = \sum_{i=0}^j \rank(M_{n}(v_i - w_i)),$$
for $ j = 0 ,1, \ldots $ As usual, we may define $ \dsumrank(v(D), w(D)) = \lim_{j \rightarrow \infty} \dsumrank(v_{[0,j]} , w_{[0,j]} ) $. As in the block case, each term in the sum corresponds to a shot of a matrix-multiplicative channel, see \cite{NappPintoVetRos2017,ma16,WachStinSido15}.

For an $(n,k)$ convolutional code $\C \subseteq \F_{q^M}^n[D]$, we  define its \emph{free sum-rank distance} as
$$ \drfree(\mathcal{C})= \min \left\{ \sum_{i\geq 0}^{} \rank(v_i) \ \ | \ \
  v(D) \in \mathcal{C},
  v(D)\neq 0 \right\}, $$
and its $ j $th \emph{column rank distance}, for $ j = 0,1,\ldots $, as
$$ \dcrfree(\mathcal{C})= \min \left\{ \sum_{i=0}^{j} \rank(v_i) \ \ | \ \
  v(D) \in \mathcal{C},
  v_0\neq 0 \right\}. $$
As the Hamming distance is always larger than or equal to the rank distance, the following upper bound follows from \cite[Proposition 2.2]{gl03} (see also \cite[Lemma 1]{ma16}):
\begin{equation}\label{eq:col_bound}
\mbox{$d^j_{\mbox{\rm\tiny SR}}$}(\mathcal{C})  \leq (j+1)(n-k)+1.
\end{equation}

For a systematic $ (n,k) $ convolutional code ${\mathcal C}$ with encoder of memory $m$, it is easy to see that the number $(n-k)(m +1)+1$ is the maximum possible value for the free sum-rank (and Hamming) distance of $\C$.

The codes $\C$ having $\mbox{$d^j_{\mbox{\rm\tiny SR}}$}(\mathcal{C})  = (j+1)(n-k)+1$ for $j=0,1,\dots , m$ will be called {\em memory Maximum Sum Rank} convolutional codes ($m$-MSR) (see \cite{ma16}). In the context of Hamming metric these codes are closely related to the codes called {\em Optimum Distance Profile} \cite[p. 112]{jo99}. The following result gives necessary and sufficient conditions for a convolutional code to be $m$-MSR and it was presented in \cite[Theorem 3]{ma16}. First 
define

\begin{equation}\label{eq:diag}
A^*_{[0,j]}=\diag(A^*_0, A^*_1, \dots, A^*_j)= \left[
                              \begin{array}{cccc}
                                A^*_0 &  &  &  \\
                                 & A^*_1 &  &  \\
                                 &  & \ddots&  \\
                                 &  &  & A^*_j \\
                              \end{array}
                            \right],
\end{equation}
with $A^*_i\in \F_q^{n \times \rho_i}$  matrices, $\rho_i \in \N$ for $i=0,1,\dots, j$.

\begin{theorem}[\cite{ma16}]\label{th:canadianos}
For $0\leq i \leq j$, let $0 \leq \rho_i \leq n$ satisfy
\begin{equation}\label{eq:canadianos}
  \sum_{h=0}^{i} \rho_h \leq k(i+1),
\end{equation}
for all $i\leq j$ and with equality for $i=j$. The following are equivalent for any convolutional code:
\begin{enumerate}
  \item $\mbox{$d^j_{\mbox{\rm\tiny SR}}$}(\mathcal{C})  = (j+1)(n-k)+1$;
  \item for all full rank $A^*_{[0,j]}=\diag(A^*_0, A^*_1, \dots, A^*_j)$ constructed from full rank blocks $A^*_i \in \F_q^{n \times \rho_i}$ and $\rho_i$ that satisfy (\ref{eq:canadianos}), the product $G_j^c A^*_{[0,j]}$ is nonsingular.
\end{enumerate}
\end{theorem}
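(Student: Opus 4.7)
The plan translates the algebraic condition $u\, G_j^c A^*_{[0,j]} = 0$ into rank conditions on the codeword $v(D) = u(D) G(D) \in \mathcal{C}$, using the equivalence $v_l A^*_l = 0 \Leftrightarrow \rank(v_l) \leq n - \rho_l$ valid for any $A^*_l \in \F_q^{n \times \rho_l}$ of full column rank. Throughout, I will use that $G(D)$ being basic forces $G_0$ to have full row rank $k$ (since $G_0 H_0 = I_k$ whenever $G(D) H(D) = I_k$), so $v_0 = u_0 G_0 \neq 0$ is equivalent to $u_0 \neq 0$, and more generally the first nonzero input $u_t$ yields $v_t = u_t G_0 \neq 0$.

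The main obstacle, and the key auxiliary tool, is a monotonicity lemma: $d^j_{\rm SR}(\mathcal{C}) = (j+1)(n-k)+1$ forces $d^{j'}_{\rm SR}(\mathcal{C}) = (j'+1)(n-k)+1$ for every $j' \leq j$. The plan is to prove it by extending any witness $v$ of $d^{j-1}_{\rm SR}(\mathcal{C}) < j(n-k)+1$ into a witness of $d^j_{\rm SR}(\mathcal{C}) < (j+1)(n-k)+1$ via $w(D) = v(D) + D^j \tilde u_0 G(D) \in \mathcal{C}$, choosing $\tilde u_0 \in \F_{q^M}^k$ so that $\rank(v_j + \tilde u_0 G_0) \leq n-k$. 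Such $\tilde u_0$ exists because the block code $\im_{\F_{q^M}} G_0 \subseteq \F_{q^M}^n$, being an $[n,k]$ linear code, has rank covering radius at most $n-k$: pick $n-k$ linearly independent columns of a parity-check matrix and solve the corresponding syndrome equation to produce an error of rank $\leq n-k$ supported on those coordinates.

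For $(1) \Rightarrow (2)$: Suppose toward contradiction that $G_j^c A^*_{[0,j]}$ is singular for some valid $(\rho_l)$ and $(A^*_l)$. Pick a nonzero $u$ in the left kernel and let $t$ be minimal with $u_t \neq 0$; the shifted codeword $\tilde v(D) = v(D)/D^t = (u(D)/D^t) G(D) \in \mathcal{C}$ has $\tilde v_0 = u_t G_0 \neq 0$. The partial sum bound $\sum_{h=0}^{t-1} \rho_h \leq tk$ combined with $\sum_{h=0}^{j} \rho_h = (j+1)k$ gives $\sum_{l=t}^{j} \rho_l \geq (j-t+1)k$, so $\sum_{s=0}^{j-t} \rank(\tilde v_s) = \sum_{l=t}^{j} \rank(v_l) \leq (j-t+1)(n-k)$. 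Thus $\tilde v$ witnesses $d^{j-t}_{\rm SR}(\mathcal{C}) \leq (j-t+1)(n-k)$, contradicting the monotonicity lemma.

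For $(2) \Rightarrow (1)$: Argue by contrapositive. Given $v \in \mathcal{C}$ with $v_0 \neq 0$ and $\sum_{l=0}^{j} \rank(v_l) \leq (j+1)(n-k)$, I would choose a profile $\rho_l \leq n - \rank(v_l)$ satisfying (\ref{eq:canadianos}) and matrices $A^*_l \in \F_q^{n \times \rho_l}$ drawn from a basis of the right kernel of $M_n(v_l)$, so that $v_l A^*_l = 0$ for all $l$; then the corresponding nonzero $u$ lies in the left kernel of $G_j^c A^*_{[0,j]}$, contradicting (2). Feasibility of $(\rho_l)$ reduces to the suffix inequalities $\sum_{h=i}^{j} \rank(v_h) \leq (j-i+1)(n-k)$; if they fail at some index $i$, the complementary prefix $\sum_{h=0}^{i-1} \rank(v_h) < i(n-k)$ shows that $v$ already witnesses $d^{i-1}_{\rm SR}(\mathcal{C}) < i(n-k)+1$, opening the door to an induction on $j$ with base case $j=0$ supplied by Theorem \ref{th:gabidulin}.
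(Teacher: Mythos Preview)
The paper does not prove Theorem~\ref{th:canadianos}; it is quoted from \cite{ma16} without proof (see the citation in the theorem header), so there is no in-paper argument to compare against.

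Your proof is correct in substance and gives a self-contained argument. Two small points of presentation:
\begin{itemize}
\item The ``equivalence'' $v_l A^*_l = 0 \Leftrightarrow \rank(v_l)\le n-\rho_l$ is not an equivalence for a \emph{fixed} $A^*_l$; only the forward implication holds in general. You in fact use it correctly (forward direction in $(1)\Rightarrow(2)$; existential choice of $A^*_l$ from $\ker M_n(v_l)$ in $(2)\Rightarrow(1)$), so this is only a wording issue.
\item Your induction in $(2)\Rightarrow(1)$ tacitly uses that condition (2) at level $j$ implies condition (2) at every level $j'<j$. This is true: extend any admissible $(\rho_l)_{l=0}^{j'}$ by $\rho_l=k$ for $j'<l\le j$ and any full-rank $A^*_l\in\F_q^{n\times k}$; then $G_j^cA^*_{[0,j]}$ is block upper triangular with $G_{j'}^cA^*_{[0,j']}$ as its top-left diagonal block, so nonsingularity of the former forces nonsingularity of the latter. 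The feasibility of $(\rho_l)$ under the suffix inequalities $\sum_{h=i}^{j}\rank(v_h)\le(j-i+1)(n-k)$ is then obtained, e.g., by setting $\rho_l=n-\rank(v_l)$ and greedily reducing from $l=0$ upward until the total equals $(j+1)k$; the suffix inequalities guarantee the lower-bound constraints $\sum_{h=i}^j\rho_h\ge(j-i+1)k$ remain satisfied.
\end{itemize}
The covering-radius step in your monotonicity lemma is also fine: picking $n-k$ independent columns of a parity-check matrix of $\im G_0$ and solving for the syndrome yields an error of Hamming weight, hence rank, at most $n-k$.
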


\subsection{Superregular matrices}


Superregular matrices have been a fundamental notion in coding theory as they can be used to construct systematic codes with optimal Hamming distance, both block and convolutional codes. Roughly speaking, this is due to the fact that a superregular matrix has the following property: Take any of its rows with Hamming weight, say  $d$. Then, any combination of this row with $t$ other rows yields a vector of Hamming weight $\geq d-t$,  see \cite[Theorem 3.1]{al16}. In this paper, we will show that the notion of superregular matrices can be also used to build codes with maximum rank or sum-rank distances. We will present later some constructions that involves finding a class of superregular matrices with entries in $\mathbb{F}_{q^M}$ that preserve the property of superregularity after some multiplication and addition of matrices in the base field $\mathbb{F}_{q}$. Next, we formally introduce the notion of superregular matrix.

Let $F= ( \mu_{i,j} )_{1 \leq i,j \leq m} \in \F_{q^M}^{m \times m} $, and let $S_m$ the symmetric group of order $m$. Recall that the determinant of $F$ is given by
\begin{equation}\label{deter}
|F|=\sum_{\sigma\in S_m}{\sgn(\sigma)}\mu_{1\sigma(1)}\cdots \mu_{m\sigma(m)},
\end{equation}
where the sign of the permutation $\sigma$, denoted by $\sgn(\sigma)$, is $1$ (resp. $-1$) if $\sigma$ can be written as product of an even (resp. odd) number of transpositions. A {\em trivial term} of the determinant is a term of (\ref{deter}), $\mu_{1\sigma(1)}\cdots \mu_{m\sigma(m)}$, equal to zero. If $F$ is a square submatrix of a matrix $B$, with entries in $\mathbb{F}_{q^M}$, and all the terms of the determinant of $F$ are trivial we say that $|F|$ is a \textit{trivial minor} of $B$. We say that $B$ is \textit{superregular} if all its non-trivial minors are different from zero. When a matrix has all its minors nonzero we call it \emph{full superregular}. These matrices have obviously all their entries nonzero and all minors are non-trivial. In that case the classical characterization of MDS block codes follows.

\begin{lemma}[\cite{ma77}]\label{lem:MDSblock}
  Let $\C= \im_{\mathbb{F}_{q^M}} G \subseteq \mathbb{F}_{q^M}^n $ be an $(n,k)$ block code. Then, $\C$ is MDS if, and only if, all $k \times k$ full size minors of $G$ are nonzero. If $G$ is in systematic form, \ie, $G =[I_k \ \ P]$  for some $P\in \mathbb{F}_{q^M}^{k \times (n-k)}$, then $\C$ is MDS if, and only if, $P$ is full superregular.
\end{lemma}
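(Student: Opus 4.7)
The plan is to split the lemma into its two equivalences and treat each by reducing to elementary linear algebra. For the first claim, I will invoke the Singleton bound, already stated in the preliminaries as $d_H(\mathcal{C}) \leq n-k+1$, so that MDS is equivalent to $d_H(\mathcal{C}) \geq n-k+1$. Since $d_H(\mathcal{C})$ is the minimum Hamming weight of a nonzero codeword, this lower bound holds iff no nonzero codeword has $k$ or more zero coordinates. Writing codewords as $uG$ with $u \in \F_{q^M}^k$, the condition ``$uG$ vanishes on a set $S$ of $k$ coordinates for some $u \neq 0$'' translates to ``the $k \times k$ submatrix $G_S$ of $G$ formed by the columns indexed by $S$ is singular.'' Quantifying over $S$ of size $k$ gives the first equivalence: $\mathcal{C}$ is MDS iff every $k \times k$ minor of $G$ is nonzero.

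For the systematic case $G = [I_k \ P]$, I will parameterize each $k \times k$ submatrix of $G$ by a pair $(J,L)$, where $J \subseteq \{1,\dots,k\}$ indexes the chosen identity columns and $L \subseteq \{1,\dots,n-k\}$ indexes the chosen columns of $P$, subject to $|J| + |L| = k$. Applying Laplace expansion successively along the $|J|$ columns $e_i$ with $i \in J$ forces the rows indexed by $J$ to be peeled off, and the determinant of the $k \times k$ submatrix of $G$ equals, up to sign, the determinant of the submatrix $P[J^c; L]$ obtained from $P$ by keeping the rows in $J^c = \{1,\dots,k\} \setminus J$ and the columns in $L$. Since $|J^c| = |L|$, this is a square $|L| \times |L|$ minor of $P$. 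As $(J,L)$ varies, the size $|L|$ covers every integer from $0$ (giving the trivial minor $\det(I_k) = 1 \neq 0$) up to $\min(k,n-k)$, and every square submatrix of $P$ of every admissible size arises in this way. Consequently, all $k \times k$ minors of $G$ are nonzero iff every square minor of $P$ is nonzero, which by definition is full superregularity.

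The main obstacle in this proof is essentially bookkeeping rather than mathematics: one must make sure that the correspondence $(J,L) \mapsto P[J^c;L]$ between $k \times k$ submatrices of $[I_k \ P]$ and square submatrices of $P$ is both well defined and exhaustive, and that boundary cases ($|L| = 0$ and $|L| = \min(k,n-k)$) are handled consistently. No deep inequality, field-size hypothesis, or use of the rank-metric machinery introduced later in the paper is needed; the argument is a direct unfolding of the Singleton bound and Laplace expansion.
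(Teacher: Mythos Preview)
The paper does not supply its own proof of this lemma: it is stated as a known result, attributed to \cite{ma77}, and used as a tool. So there is no in-paper argument to compare against.

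Your proof is correct and follows the standard line. The first equivalence is exactly the observation that a nonzero codeword $uG$ has weight at most $n-k$ iff it vanishes on some $k$-subset $S$, which happens for some $u\neq 0$ iff the $k\times k$ submatrix $G_S$ is singular. For the systematic part, your bijection $(J,L)\mapsto P[J^c;L]$ between $k\times k$ submatrices of $[I_k\ P]$ and square submatrices of $P$ is the right device, and the Laplace expansion along the identity columns indeed reduces the determinant (up to sign) to $\det P[J^c;L]$. The surjectivity check---given any $r\times r$ submatrix $P[R;L]$ of $P$, take $J=\{1,\dots,k\}\setminus R$---shows every minor of $P$ is hit, so the equivalence with full superregularity follows. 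The boundary cases are harmless, as you note.
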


%

It is important to remark here that there exist several related notions of superregular matrices in the literature. Frequently, see for instance \cite{BlaumRoth}, a superregular matrix is defined to be full superregular, e.g., Cauchy matrices. In \cite{ma77}, several examples of triangular matrices were constructed in such a way that all submatrices inside this triangular configuration were nonsingular. However, all these notions do not apply to more general setting, such as the convolutional case, as in this context we need to consider submatrices that contain zeros. The more recent contributions \cite{gl03,HaOs2018,Hutchinson2008} consider the same notion of superregularity as in this paper (considering minors with zeros), but defined only for lower triangular Toeplitz matrices, see  \cite{HaOs2018} for examples of superregular matrices of size up to $5\times 5$. In \cite{BaYt2018} \emph{block} Toeplitz superregular matrices were considered for high rate convolutional codes (see also \cite{AlmeidaNappPinto2013,al16}). The advantage of the definition of superregularity considered here is that unifies all existing notions in the literature.

\section{Systematic encoders with maximum rank or maximum sum rank distance}


In this section, we extend the results that link MDS codes and superregular matrices to the context of rank-metric and sum-rank-metric codes, both block and convolutional. We will see that a code is MRD or MSRD if its systematic encoder with entries in $\F_{q^M}$ is superregular and remains superregular after some operations with matrices that have entries in the base field $\F_{q}$. We first treat the block case and then address the convolutional counterpart.

\subsection{Systematic Encoders of MRD codes}

If the encoder is given in systematic form one can derive a  characterization in terms only on the parity part of the encoder. This result was obtained independently  in \cite[Theorem 3.11]{Neri_Thesis}. The following proof will be useful for the proof of Theorem \ref{th MSRD characterization systematic} in the next subsection.

\begin{theorem}\label{th:syst_block_MRD}
Let $G= [I_k,  P] \in \F_{q^M}^{k \times n}$ be a systematic encoder of a rank metric block code $\C$. Then, the following statements are
equivalent.
\begin{enumerate}
  \item $\C$ is MRD;
  \item the matrix
  $$
   BP\widetilde{A} + C  \in \F_{q^M}^{k \times (n-k)}
  $$
is full superregular, for all $C\in \F_q^{k \times (n-k)}$ and for all nonsingular upper-triangular matrices $B\in \F_q^{k \times k}$ and $\widetilde{A}\in\F_q^{(n-k) \times (n-k)}$.
\end{enumerate}

\end{theorem}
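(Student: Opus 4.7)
The plan is to invoke Theorem~\ref{th:gabidulin}(3), which reduces the MRD property of $\C$ to the nonvanishing of all full-size (i.e., $k\times k$) minors of $GU$ as $U$ ranges over the nonsingular upper-triangular matrices in $\F_q^{n\times n}$, and then to read off what this means after partitioning $U$ into blocks compatible with the systematic splitting $G=[I_k,P]$.

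Every nonsingular upper-triangular $U\in\F_q^{n\times n}$ can be written uniquely as
$$U=\begin{bmatrix} B & C \\ 0 & \widetilde{A} \end{bmatrix},$$
with $B\in\F_q^{k\times k}$ and $\widetilde{A}\in\F_q^{(n-k)\times(n-k)}$ nonsingular upper-triangular and $C\in\F_q^{k\times(n-k)}$ arbitrary, and this parametrization is a bijection. Block multiplication then yields $GU=[B,\ C+P\widetilde{A}]$. Since left-multiplication by $B^{-1}$ scales every $k\times k$ minor by the nonzero constant $\det(B^{-1})$, the $k\times k$ minors of $GU$ are all nonzero if and only if the $k\times k$ minors of
$$B^{-1}GU=\bigl[I_k,\ B^{-1}P\widetilde{A}+B^{-1}C\bigr]$$
are all nonzero.

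At this point I would invoke the standard Laplace-expansion identity already used in Lemma~\ref{lem:MDSblock}: for any $Q\in\F_{q^M}^{k\times(n-k)}$, the $k\times k$ minors of $[I_k,\ Q]$ are, up to sign, exactly the minors of $Q$ of every possible size (obtained by choosing which rows of $Q$ to delete via the standard-basis columns kept from $I_k$). Thus those minors are all nonzero if and only if $Q$ is full superregular. Applied with $Q=B^{-1}P\widetilde{A}+B^{-1}C$, this shows that $\C$ is MRD if and only if $B^{-1}P\widetilde{A}+B^{-1}C$ is full superregular for all admissible $B,\widetilde{A},C$.

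The final step is a cosmetic reparametrization: as $B$ ranges over all nonsingular upper-triangular matrices in $\F_q^{k\times k}$ so does $B^{-1}$, and for each fixed $B$, as $C$ ranges over $\F_q^{k\times(n-k)}$ so does $BC$; hence the family $\{B^{-1}P\widetilde{A}+B^{-1}C\}$ coincides with the family $\{BP\widetilde{A}+C\}$ appearing in item~2. The main point requiring care is this substitution, together with the observation that any $\F_q$-upper-triangular $n\times n$ matrix automatically has zero $(n-k)\times k$ lower-left block and upper-triangular diagonal blocks, so the parametrization above really does exhaust all of Theorem~\ref{th:gabidulin}(3); everything else is block arithmetic and the Laplace identity.
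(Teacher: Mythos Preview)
Your argument is correct and follows essentially the same route as the paper's proof: both invoke Theorem~\ref{th:gabidulin}(3), block-partition the upper-triangular matrix $U$, left-multiply by the inverse of the top-left block to restore the identity, and then apply Lemma~\ref{lem:MDSblock}. The only cosmetic difference is that the paper names the top-left block $\widetilde{C}$ and sets $B=\widetilde{C}^{-1}$ from the outset (so the expression $BP\widetilde{A}+C$ appears directly), whereas you name it $B$ and carry out the substitution $B\mapsto B^{-1}$, $C\mapsto B^{-1}C$ at the end; the content is identical.
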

\begin{proof}
$(2. \Rightarrow 1.): $ Firstly we will prove that if the matrix
$$ BP\widetilde{A} + C \in \F_{q^M}^{k \times (n-k)} $$
is full superregular, for all $C\in \F_q^{k \times (n-k)}$ and for all nonsingular upper-triangular matrices $B\in \F_q^{k \times k}$ and $\widetilde{A}\in\F_q^{(n-k) \times (n-k)} $, then all the full size minors of $GU$ are nonzero for all nonsingular upper triangular matrices $U\in \F_q^{n \times n}$. Hence the result follows from Theorem \ref{th:gabidulin}.

Suppose that there exists a $k \times k$ zero minor of $GU$ for a nonsingular upper-triangular $U\in \F_q^{n \times n}$. Write
  $$
  U= \left[
       \begin{array}{cc}
         \widetilde{C} &  \widehat{C}\\
         0 &  \widetilde{A} \\
       \end{array}
     \right],
  $$
with $\widetilde{C}\in \F_q^{k \times k}$ and   $\widetilde{A}\in \F_q^{(n-k) \times (n-k)}$ nonsingular upper-triangular matrices. Thus,
  $$ GU= [ \widetilde{C} \ \ \widehat{C} + P \widetilde{A} ].$$
Denote $B=(\widetilde{C})^{-1} \in \F_q^{k \times k}$, which is a nonsingular upper-triangular matrix. Thus, 
  $$
  BGU= [I_n \ \ C + BP  \widetilde{A} ],
  $$
where $C=B\widehat{C}$ is a matrix with entries in the base field $\F_q$. As the left multiplication of $GU$ by an invertible matrix $B$ does not change the zeroness of the full size minors of $GU$, we have that $ [ I_n \ \ C + BP  \widetilde{A} ]$ has a zero minor and by Lemma \ref{lem:MDSblock}, $C + BP  \widetilde{A} $ is not full superregular.

$(1. \Rightarrow 2.):$ Suppose that there are a matrix $C\in \F_q^{k \times (n-k)}$ and nonsingular upper-triangular matrices $B\in \F_q^{k \times k}$ and $\widetilde{A}\in\F_q^{(n-k) \times (n-k)}$ such that $C +  BP  \widetilde{A}$ is not full superregular. Then, there exists a $k \times k$ minor of $ [ I_ k \ \ C +  BP  \widetilde{A} ] $ equal to zero and therefore $B^{-1} [ I_ k \ \ C +  BP  \widetilde{A} ] $  has a full size zero minor. Thus,
   $$
   B^{-1} [ I_ k \ \ C +  BP  \widetilde{A} ] =[ B^{-1} \ \ B^{-1}C +  P  \widetilde{A} ] = [I_k \ \ P ] \left[
       \begin{array}{cc}
         B^{-1} & B^{-1}C \\
         0 & \widetilde{A}
       \end{array}
     \right].
     $$
Denote
\[  U= \left[
       \begin{array}{cc}
    B^{-1} & B^{-1}C \\
         0 & \widetilde{A}
       \end{array}
     \right].\]
It is straightforward to verify that $U$ is nonsingular and upper triangular and $GU$ has a full size zero minor. Taking into consideration the statements of Theorem \ref{th:gabidulin} this concludes the proof.
\end{proof}


\subsection{Systematic Encoders of MSRD codes}

In this subsection, we will give characterizations for MSRD systematic encoders. The main result (Theorem \ref{th MSRD characterization systematic}) is an extension of Theorem \ref{th:syst_block_MRD}, which can be recovered by setting $ \ell = 1 $, the case where the sum-rank metric becomes the rank metric.

It is very important to remark that, in contrast with block codes in the rank metric ($ \ell = 1 $) or Hamming metric ($ \ell = n $), the use of a non-trivial code-length partition $ n = n_1 + n_2 + \cdots + n_\ell $ implies that we no longer can assume that the identity matrix is placed in the first $ k $ coordinates. For full generality, we will need to consider arbitrary partitions $ k = k_1 + k_2 + \cdots + k_\ell $ of an information set. This will be transparent in the convolutional case, due to their polynomial nature and the fact that the identity matrix is a constant matrix (see Theorem \ref{th:main}).


\begin{theorem}\label{th MSRD characterization systematic}
Let $ n = n_1 + n_2 + \cdots + n_\ell $ be a code-length partition defining the sum-rank metric as in (\ref{eq sum-rank metric definition}). Let $ k = k_1 + k_2 + \cdots + k_\ell $ be a dimension partition, where $ 0 \leq k_i \leq n_i $, for $ i = 1,2, \ldots, \ell $. Finally, let
$$ G = [J_1, P_1, J_2, P_2, \ldots, J_\ell, P_\ell] \in \F_{q^M}^{k \times n} $$
be a systematic encoder of an $ (n,k) $ code $\C \subseteq \F_{q^M}^n $, where $ P_i \in \F_{q^M}^{k \times (n_i - k_i)} $ is arbitrary and where $ J_i \in \F_{q^M}^{k \times k_i} $ is zero everywhere except for the $ i $-th block of $ k_i $ rows, where it is the identity matrix of size $ k_i $, for $ i = 1,2, \ldots, \ell $. In other words, $ I_k = (J_1, J_2, \ldots, J_\ell)$. Denote $ P = [P_1, P_2, \ldots, P_\ell] \in \F_{q^M}^{k \times (n-k)} $. Then, the following are equivalent,
\begin{enumerate}
  \item $\C$ is MSRD;
  \item the matrix
$$ \left[ \begin{array}{cccc}
  B_1 &  &  &  \\
  & B_2 &  &  \\
  &  & \ddots&  \\
  &  &  & B_\ell \\
 \end{array} \right] P \left[ \begin{array}{cccc}
 \widetilde{A}_1 &  &  &  \\
 & \widetilde{A}_2 &  &  \\
  &  & \ddots&  \\
  &  &  & \widetilde{A}_\ell \\
 \end{array} \right] + \left[ \begin{array}{cccc}
  C_1 &  &  &  \\
  & C_2 &  &  \\
  &  & \ddots&  \\
  &  &  & C_\ell \\
 \end{array} \right] \in \F_{q^M}^{k \times (n-k)} $$
is full superregular, for all matrices $ C_i \in \F_q^{k_i \times (n_i-k_i)}$ and for all nonsingular upper-triangular matrices $B\in \F_q^{k_i \times k_i}$ and $\widetilde{A}\in\F_q^{(n_i-k_i) \times (n_i-k_i)}$, for $ i = 1,2, \ldots, \ell $.
\end{enumerate}


\end{theorem}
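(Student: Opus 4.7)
The plan is to mirror the proof of Theorem~\ref{th:syst_block_MRD} blockwise, using Theorem~\ref{th MSRD characterization} in place of Theorem~\ref{th:gabidulin}. First I would establish an upper-triangular block-diagonal refinement of Theorem~\ref{th MSRD characterization}: $\C$ is MSRD if and only if all full-size minors of $GA$ are nonzero for every nonsingular $A = \diag(A_1,\ldots,A_\ell)$ with each $A_i \in \F_q^{n_i \times n_i}$ upper triangular. The forward implication is immediate; the reverse follows by applying the Bruhat lemma blockwise (so $A_i = V_iQ_iU_i$, giving $A = VQU$ with $V,U$ upper triangular block-diagonal and $Q$ a block-diagonal permutation) and running the same reduction as (3)$\Rightarrow$(2) in Theorem~\ref{th:gabidulin}, since block-diagonal products of upper triangular and permutation matrices remain block-diagonal.

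Next I parametrize each admissible block as
\[
A_i = \begin{bmatrix} \widetilde{C}_i & \widehat{C}_i \\ 0 & \widetilde{A}_i \end{bmatrix},
\]
with $\widetilde{C}_i \in \F_q^{k_i \times k_i}$ and $\widetilde{A}_i \in \F_q^{(n_i-k_i)\times(n_i-k_i)}$ nonsingular upper triangular, and $\widehat{C}_i \in \F_q^{k_i\times(n_i-k_i)}$ arbitrary. A direct computation gives $[J_i, P_i] A_i = [J_i\widetilde{C}_i,\ J_i\widehat{C}_i + P_i\widetilde{A}_i]$. Now I left-multiply $GA$ by $\widetilde{B} = \diag(B_1,\ldots,B_\ell)$ with $B_i = \widetilde{C}_i^{-1}$: because $\widetilde{B}$ is block-diagonal with row-block sizes $k_i$ matching the support of each $J_i$, one obtains $\widetilde{B} J_i \widetilde{C}_i = J_i$, while $\widetilde{B} J_i \widehat{C}_i$ vanishes outside the $i$-th row-block, where it equals $C_i := B_i\widehat{C}_i$. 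Collecting everything yields
\[
\widetilde{B} G A = [J_1, M_1, J_2, M_2, \ldots, J_\ell, M_\ell],
\]
where $[M_1,\ldots,M_\ell] = \diag(C_1,\ldots,C_\ell) + \widetilde{B} P \widetilde{A}$ is exactly the matrix appearing in condition~(2).

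Left-multiplication by the invertible $\widetilde{B}$ preserves the vanishing pattern of full-size minors, so every $k\times k$ minor of $GA$ is nonzero iff every $k\times k$ minor of $\widetilde{B}GA$ is. Since the $J_i$ columns of $\widetilde{B}GA$ together form $I_k$ up to a column permutation, a Laplace expansion along those columns (as in the argument behind Lemma~\ref{lem:MDSblock}) reduces each such minor to a square minor of arbitrary size of the parity block $[M_1,\ldots,M_\ell]$; hence all $k\times k$ minors of $\widetilde{B}GA$ are nonzero iff $\diag(C_1,\ldots,C_\ell) + \widetilde{B}P\widetilde{A}$ is full superregular. Because $B_i$ is invertible, $C_i = B_i\widehat{C}_i$ ranges over $\F_q^{k_i\times(n_i-k_i)}$ as $\widehat{C}_i$ does, so the triples $(B_i,\widetilde{A}_i,C_i)$ in condition~(2) are in bijection with the parameters of the upper-triangular blocks $A_i$ from Step~1, closing both directions of the equivalence.

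The main obstacle is the careful bookkeeping of the interleaved block structure: the $J_i$ and $P_i$ alternate along the $n$-column layout, and the row-block sizes $k_1,\ldots,k_\ell$ of $\widetilde{B}$ dictated by the dimension partition must be exactly what is needed to simultaneously preserve each $J_i$ as $J_i$ and send each perturbation $\widetilde{B} J_i\widehat{C}_i$ into its own row-block, producing the block-diagonal pattern $\diag(C_1,\ldots,C_\ell)$ on the parity side. Once this combinatorial accounting is verified, the rest is a direct blockwise lift of the proof of Theorem~\ref{th:syst_block_MRD}.
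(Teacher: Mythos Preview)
Your proposal is correct and follows essentially the same approach as the paper's own proof: both first reduce Theorem~\ref{th MSRD characterization} to block-diagonal upper-triangular $A_i$ via a blockwise Bruhat decomposition, then partition each $A_i$ as $\begin{bmatrix}\widetilde{C}_i & \widehat{C}_i\\ 0 & \widetilde{A}_i\end{bmatrix}$, left-multiply by $\diag(\widetilde{C}_i^{-1})$, and reduce the full-minor condition on $GA$ to full superregularity of $\diag(B_i)\,P\,\diag(\widetilde{A}_i)+\diag(C_i)$ via Lemma~\ref{lem:MDSblock}. Your write-up is slightly more explicit about the bijection $(\widetilde{C}_i,\widehat{C}_i)\leftrightarrow(B_i,C_i)$ and the Laplace-expansion step, but there is no substantive difference in strategy.
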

\begin{proof}
We may take the matrices $ A_i $ in Theorem \ref{th MSRD characterization} to be upper triangular. The arguments in this proof will be an extension of those in the proof of Theorem \ref{th:syst_block_MRD}. However, we will need to be careful regarding the partition of the information set, which is not an issue in the cases $ \ell = 1 $ or $ \ell = n $, as explained at the beginning of this subsection.

Assume that there exists a $ k \times k $ zero minor in $ GA $, for a nonsingular block-diagonal matrix $ A $ as in (\ref{eq def block diagonal matrix}), where $ A_i $ is upper triangular, for $ i = 1,2, \ldots, \ell $. Write
  $$
  A_i = \left[
       \begin{array}{cc}
         \widetilde{C}_i &  \widehat{C}_i \\
         0 & \widetilde{A}_i \\
       \end{array}
     \right] \in \F_q^{n_i \times n_i} ,
  $$
where $ \widetilde{C}_i \in \F_q^{k_i \times k_i} $, $ \widetilde{A}_i \in \F_q^{(n_i - k_i) \times (n_i - k_i)} $ and $\widehat{C}_i \in \F_q^{k_i \times (n_i - k_i)} $, for $ i = 1,2, \ldots, \ell $ (notice that if for some $i$, $k_i=0$ then both matrices  $ \widetilde{C}_i$ and $ \widehat{C}_i$ do not exist and if $k_i=n_i$ then both matrices $ \widetilde{A}_i$ and  $ \widehat{C}_i$ do not exist). Now we have that
$$ GA = [J_1, P_1, J_2, P_2, \ldots, J_\ell, P_\ell] \left[ \begin{array}{cccc}
       \begin{array}{cc}
         \widetilde{C}_1 &  \widehat{C}_1 \\
         & \widetilde{A}_1 \\
       \end{array} &  &  &  \\
  & \begin{array}{cc}
         \widetilde{C}_2 &  \widehat{C}_2 \\
         & \widetilde{A}_2 \\
       \end{array} &  &  \\
  &  & \ddots&  \\
  &  &  &  \begin{array}{cc}
         \widetilde{C}_\ell &  \widehat{C}_\ell \\
         & \widetilde{A}_\ell \\
       \end{array} \\
 \end{array} \right] $$

$$ = \left[ \begin{array}{ccccccc}
\widetilde{C}_1 &  \widehat{C}_1 & & & & & \\
& & \widetilde{C}_2 &  \widehat{C}_2 & & & \\
& & & & \ddots & & \\
& & & & & \widetilde{C}_\ell &  \widehat{C}_\ell
\end{array} \right] + P \left[ \begin{array}{ccccccc}
0 & \widetilde{A}_1 & & & & & \\
 & & 0 &\widetilde{A}_2 & & & \\
 & & & & \ddots & & \\
  & & & & & 0 & \widetilde{A}_\ell
\end{array} \right]. $$
Denote $ B = {\rm diag}((\widetilde{C}_1)^{-1}, (\widetilde{C}_2)^{-1}, \ldots, (\widetilde{C}_\ell)^{-1}) \in \F_q^{k \times k} $, which is nonsingular and upper triangular. Then the reader can check that
$$ B G A = [J_1, T_1, J_2, T_2, \ldots, J_\ell, T_\ell ] \in \F_{q^M}^{k \times n}, $$
where again $ I_k = [J_1, J_2, \ldots, J_\ell ] $, but now
$$ T = [T_1, T_2, \ldots, T_\ell] = B P \left[ \begin{array}{cccc}
 \widetilde{A}_1 &  &  &  \\
 & \widetilde{A}_2 &  &  \\
  &  & \ddots&  \\
  &  &  & \widetilde{A}_\ell \\
 \end{array} \right] + \left[ \begin{array}{cccc}
  C_1 &  &  &  \\
  & C_2 &  &  \\
  &  & \ddots&  \\
  &  &  & C_\ell \\
 \end{array} \right] \in \F_{q^M}^{k \times (n-k)}, $$
where $ C_i = \widetilde{C}_i^{-1} \widehat{C}_i \in \F_q^{k_i \times (n_i - k_i)} $, for $ i = 1,2, \ldots, \ell $. As in the proof of Theorem \ref{th:syst_block_MRD}, it follows that $ T $ is not full superregular.

The reversed implication is proven as in Theorem \ref{th:syst_block_MRD}, again taking into account the partition $ k = k_1 + k_2 + \cdots + k_\ell $ and the block-diagonal nature of the corresponding matrices.
\end{proof}

It is immediate to see that Theorem \ref{th:syst_block_MRD} is the particular case of Theorem \ref{th MSRD characterization systematic} obtained by setting $ \ell = 1 $. A bit less trivial, but still easy, is to check that the classical characterization of MDS systematic encoders is recovered from Theorem \ref{th MSRD characterization systematic} by setting $ \ell = n $, or equivalently $ n_1 = n_2 = \ldots = n_\ell = 1 $. To that end, observe that the block-diagonal matrices $ {\rm diag}(B_1, B_2, \ldots, B_\ell) $ and $ {\rm diag}(\widetilde{A}_1, \widetilde{A}_2, \ldots, \widetilde{A}_\ell) $ are nothing but nonsingular diagonal matrices, i.e., monomial matrices. As for $ C ={\rm diag}(C_0, \dots, C_m) $ we can assume without loss of generality that the partition of $k$ is $k_i=1$ for $i\leq k$ and $k_i=0$ for $k<i\leq n$. So for any $i=1, \dots, n$ we have $n_i-k_i=0$ or $k_i=0$, so $\widehat{C}_i$ never exists, therefore $C$ does not exist either and we recover the classical characterization of MDS systematic encoders.

\subsection{Systematic Encoders of $m$-MSR convolutional codes}

Column distance is arguably the most fundamental distance measure for convolutional codes, \cite[pag. 109]{jo99}. A full characterization of polynomial encoders $G(D)$ that yield codes with optimal column Hamming distance was given in \cite{gl03} for general encoders and in \cite{Gabidulin88} when the encoder is in systematic form, see also \cite[Corollary 2.5]{gl03}. In this section we provide analogous characterizations in the context of rank metric convolutional codes. We start with a general result about full size minors.


\begin{theorem}\label{th:main}
  Let $\C$ be an $[n,k]$ convolutional code with memory $m$ and a systematic encoder $G(D)=[ I_k \ \ \ P(D) ]$, where $P(D)=\displaystyle \sum^{m}_{i=0}P_i D^i \in \F_{q^M}^{k \times (n-k)}[D]$, and let $0\leq j\leq m$. Given $\underline{A}_\ell\in \F_q^{(n-k) \times (n-k)}$, $B_\ell\in \F_q^{k \times k}$ and $C_\ell\in \F_q^{k \times (n-k)}$, with $\ell=0,1,\dots , j$, consider the matrix $T_j=T_j((\underline{A}_\ell,B_\ell,C_\ell)_{\ell=0}^j)$ defined by
    \begin{equation}\label{eq:Tj}
    \begin{split}
    T_j= \left [\begin{array}{cccc}
    B_0 &  &  &  \\
    & B_1 &  &  \\
    &  & \ddots&  \\
    &  &  & B_j \\
    \end{array} \right ] & \XXX{P}{j}\XXX{\underline{A}}{j}+\\
    & \hspace{-30pt}+\left [
    \begin{array}{cccc}
    C_0 &  &  &  \\
    & C_1 &  &  \\
    &  & \ddots&  \\
    &  &  & C_j \\
    \end{array}
    \right ]=  \left [
    \begin{array}{cccc}
    T_{0,0} & T_{0,1} & \cdots & T_{0,j} \\
    & T_{1,1} &  &  \vdots \\
    &  & \ddots& T_{j-1,j} \\
    &  &  & T_{j,j} \\
    \end{array}
    \right ].
    \end{split}
    \end{equation}
   The following statements are equivalent:
  \begin{enumerate}
    \item $\mbox{$d^j_{\mbox{\rm\tiny SR}}$}(\mathcal{C})  = (j+1)(n-k)+1$;
   \item
Every square submatrix of $T_j$ with all its diagonal entries in the matrices $T_{s,t}$, where $s,t\in \{0,1,\dots, j\}$,
is nonsingular, for all $C_\ell\in \F_q^{k \times (n-k)}$ and all nonsingular upper triangular matrices $B_\ell\in \F_q^{k \times k}$,  $\underline{A}_\ell\in \F_q^{(n-k) \times (n-k)}$;
    \item $T_j$ is superregular for all $C_\ell\in \F_q^{k \times (n-k)}$ and all nonsingular upper triangular matrices $B_\ell\in \F_q^{k \times k}$,  $\underline{A}_\ell\in \F_q^{(n-k) \times (n-k)}$,  $\ell=0,1,\dots , j$.
  \end{enumerate}
\end{theorem}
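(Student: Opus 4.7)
The plan is to reduce the problem to a condition on the truncated sliding matrix $G_j^c$ via Theorem \ref{th:canadianos}, and then extend the matrix-manipulation argument of Theorem \ref{th MSRD characterization systematic} to the block Toeplitz structure appearing here.

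I would first dispatch the equivalence $(2) \Leftrightarrow (3)$, which is essentially combinatorial. Since $T_j$ is block upper triangular with $T_{s,t} = 0$ whenever $s > t$, a minor of $T_j$ is non-trivial precisely when, after sorting the selected rows and columns by block index, the resulting diagonal lies in the nonzero blocks $T_{s,t}$ with $s \leq t$ --- a standard dominance/Hall argument. Because within-block row and column permutations do not alter which minors are non-trivial, condition (2) asserts exactly that every non-trivial minor of $T_j$ is nonzero, which is the superregularity condition (3).

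For the equivalence $(1) \Leftrightarrow (2)$, I would invoke Theorem \ref{th:canadianos} to restate (1) as the nonsingularity of $G_j^c A^*_{[0,j]}$ for every block-diagonal $A^*_{[0,j]} = \diag(A^*_0, \ldots, A^*_j)$ with $A^*_i \in \F_q^{n \times \rho_i}$ of full column rank and the $\rho_i$'s satisfying (\ref{eq:canadianos}). Using the Bruhat lemma from the paper blockwise, each $A^*_i$ can be taken, up to permutation of the columns of $G_j^c$, as the first $\rho_i$ columns of a nonsingular upper triangular $A_i \in \F_q^{n \times n}$. Partitioning
$$A_i = \begin{bmatrix} \widetilde{C}_i & \widehat{C}_i \\ 0 & \widetilde{A}_i \end{bmatrix}, \qquad \widetilde{C}_i \in \F_q^{k \times k}, \ \widetilde{A}_i \in \F_q^{(n-k) \times (n-k)},$$
and using $G_0 = [I_k \ P_0]$, $G_i = [0 \ P_i]$ for $i \geq 1$, a direct computation of $G_j^c \diag(A_0, \ldots, A_j)$ followed by a column permutation that moves the $\widetilde{C}_\ell$'s to the left and a left multiplication by $\diag(B_\ell)$, with $B_\ell = \widetilde{C}_\ell^{-1}$ and $C_\ell = B_\ell \widehat{C}_\ell$, produces exactly $[I_{k(j+1)} \ T_j]$ with $T_j$ in the form (\ref{eq:Tj}). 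A singular product $G_j^c A^*_{[0,j]}$ then corresponds to a singular square submatrix of $[I_{k(j+1)} \ T_j]$, and eliminating the identity pivots (as in the block case of Theorem \ref{th:syst_block_MRD}) reduces this to a singular submatrix of $T_j$ of the type described in (2). The reverse direction is obtained by running this correspondence backwards.

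The main obstacle is the combinatorial bookkeeping matching the dominance condition $\sum_{h=0}^{i} \rho_h \leq k(i+1)$ on the block widths $\rho_i$ to the non-triviality of the resulting minor of $T_j$. Roughly, when $\rho_i \leq k$ all selected columns of block $i$ lie in the systematic half (producing identity pivots), while the additional $\rho_i - k$ columns when $\rho_i > k$ select columns of $T_j$ from block column $i$. After pivot elimination, the resulting submatrix of $T_j$ has block row and column indices whose distribution is exactly the dominance profile required for a non-trivial diagonal in the block upper triangular $T_j$. Once this translation is in place, the rest of the argument is formal and parallels the MSRD block proof.
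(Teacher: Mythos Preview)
Your approach is essentially the paper's: reduce to Theorem \ref{th:canadianos}, complete each $A^*_i$ to a nonsingular $n\times n$ matrix, apply the Bruhat lemma blockwise to replace it by an upper triangular $A_i$, split $A_i$ into its $k\times k$ and $(n-k)\times(n-k)$ diagonal blocks, left-multiply by $\diag(B_\ell)$ with $B_\ell=\widetilde{C}_\ell^{-1}$, and after a column permutation read off $[I_{k(j+1)}\mid T_j]$; then Laplace-expand over the identity columns to land on a square submatrix $M$ of $T_j$, and match the diagonal-in-the-$T_{s,t}$'s condition on $M$ to the inequalities $\sum_{h\le i}\rho_h\le k(i+1)$. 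The equivalence $(2)\Leftrightarrow(3)$ is handled the same way in both.

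One point in your sketch is not quite right and would cause trouble if taken literally. The Bruhat step does \emph{not} let you assume that $A^*_i$ is the first $\rho_i$ columns of an upper triangular $A_i$; it only says that the relevant full-size minor corresponds, after the permutation $Q_i$, to \emph{some} $\rho_i$ columns of the $i$-th block of $G_j^c A_{[0,j]}$. Consequently your heuristic ``when $\rho_i\le k$ all selected columns of block $i$ lie in the systematic half'' is false in general: even for small $\rho_i$ the permutation can throw some of those columns into the parity part. The paper handles this by writing $\rho_i=\mathfrak{I}_i+\mathfrak{T}_i$, where $\mathfrak{I}_i$ (resp.\ $\mathfrak{T}_i$) counts columns chosen from the identity (resp.\ $T_j$) portion of block $i$, with no further constraint on the split. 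The resulting submatrix $M$ of $T_j$ then has $\mathfrak{T}_i$ columns in block-column $i$ and $k(i+1)-\sum_{h\le i}\mathfrak{I}_h$ rows through block-row $i$, and the diagonal-stays-in-the-$T_{s,t}$'s condition becomes exactly $\sum_{h\le i}\mathfrak{T}_h\le k(i+1)-\sum_{h\le i}\mathfrak{I}_h$, i.e.\ the inequalities (\ref{eq:canadianos}). With that correction your outline goes through and coincides with the paper's proof.
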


\begin{proof}
($2. \Rightarrow 1.$) 
Let $ A^*_{[0,j]}=\diag( A^*_0,  A^*_1, \dots, A^*_j)$, where each $ A^*_i\in \F_q^{n \times \rho_i}$ is a full rank matrix and $\rho_i \in \N$ is such that $0 \leq \rho_i \leq n$ and

\begin{equation}\label{eq:cond_th3}
\sum_{h=0}^{i} \rho_h \leq k(i+1),
\end{equation}
for all $i = 0,1, \ldots, j$, with equality in (\ref{eq:cond_th3}) for $i=j$.
Since for each $i=0,1,\dots, j$, $A^*_i$ has full rank $\rho_i$, there exists a matrix $A'_i$ such that
\[\breve{A}_i= \left [ \begin{array}{cc}
A^*_i &  A'_i
\end{array}\right ]\in \F_q^{n \times n}\]
is nonsingular.
Suppose $G_j^c A^*_{[0,j]}$ is singular and
consider $\breve{A}_{[0,j]}= \diag(\breve{A}_0, \breve{A}_1, \dots, \breve{A}_j)$, then
\begin{equation*}
  \begin{split}
G_j^c \breve{A}_{[0,j]}=& G_j^c  \left[
\begin{array}{cccc}
A_0^* \ A'_0 &  &  &  \\
& A_1^* \ A'_1 &  &  \\
&  & \ddots&  \\
&  &  & A_j^* \ A'_j \\
\end{array}
\right],
\end{split}
\end{equation*}
has a null full size minor as $G_j^c A^*_{[0,j]} $ is a submatrix of it.

Now, by Bruhat decomposition, there exist nonsingular upper triangular matrices $A_i$ and $U_i$, such that $\breve{A}_i=A_iQ_iU_i$, where $Q_i$ is a permutation. Consider the nonsingular matrices $A_{[0,j]}=\diag(A_0, A_1, \dots, A_j)$, $U_{[0,j]}=\diag(U_0, U_1, \dots, U_j)$ and $Q_{[0,j]}=\diag(Q_0, Q_1, \dots, Q_j)$.
Then $\breve{A}_{[0,j]}= A_{[0,j]}Q_{[0,j]}U_{[0,j]}$. Since
\[G_j^cA_{[0,j]}=G_j^c \breve{A}_{[0,j]}U_{[0,j]}^{-1}Q_{[0,j]}, \]
then $G_j^cA_{[0,j]}$ has a submatrix $\mathbb{M}$ such that $|\,\mathbb{M}\,|=0$. Moreover,  this matrix is built by selecting $\rho_i$ columns between the $(ik+1)$-th and the $(i+1)k$-th columns of $G_j^c A_{[0,j]}$, $i=0,1, \dots, j$ and the $\rho_i$ satisfies (\ref{eq:cond_th3}). Write
\begin{equation*}
\begin{split}
A_i = \left[ \begin{array}{cc}
\hat{A}_i & C^*_i   \\
& \underline{A}_i \\
\end{array}
\right],
\end{split}\end{equation*}
where $\hat{A}_i\in \F_q^{k \times k}$ and $\underline{A}_i\in \F_q^{(n-k) \times (n-k)}$ are nonsingular upper triangular matrices.


Therefore,
\begin{equation*}
\begin{split}
G_j^c A_{[0,j]}=& \left[
\begin{array}{cccc}
(I_k \ \ P_0) & (0 \ \ P_1)& \cdots  &  (0 \ \ P_j)\\
& (I_k \ \ P_0) &  &  \vdots \\
&  & \ddots&  \\
&  &  & (I_k \ \ P_0) \\
\end{array}
\right]
\left[
\begin{array}{cccc}
A_0 &  &  &  \\
& A_1 &  &  \\
&  & \ddots&  \\
&  &  & A_j \\
\end{array}
\right]\\
=&     \left[
\begin{array}{cccc}
(\hat{A}_0 \ \ C^*_0 + P_0\underline{A}_0)& (0 \ \  P_1\underline{A}_1) & \cdots  &  (0 \ \ P_j\underline{A}_j)\\
& (\hat{A}_1 \ \ C^*_1 + P_0\underline{A}_0) &  &  \vdots \\
&  & \ddots&  \\
&  &  & (\hat{A}_1 \ \ C_j^* + P_j\underline{A}_j) \\
\end{array}
\right].
\end{split}\end{equation*}

Let $B_i=(\hat{A}_i)^{-1}\in \F_q^{k \times k}$ and $B_{[0,j]}=\diag(B_0, B_1, \dots, B_j)$.
Then
$$
B_{[0,j]}G_j^c A_{[0,j]}=    \\
\left[
\begin{array}{cccc}
(I_k \ \ C_0 + B_0P_0\underline{A}_0) & (0 \ \  B_0 P_1\underline{A}_1) & \cdots  &  (0 \ \ B_0 P_j\underline{A}_j) \\
& (I_k \ \ C_1 + B_1P_0\underline{A}_1) &  &  \vdots \\
&  & \ddots&  \\
&  &  & (I_k \ \ C_j + B_jP_0\underline{A}_j) \\
\end{array}
\right],
$$
where $C_i= B_i C_i^*$. Let $\rho_i= \mathfrak{I}_i + \mathfrak{T}_i$ where $\mathfrak{T}_i$ corresponds to the number of columns of $B_{[0,j]}\mathbb{M}$ selected from the block columns starting with $B_0 P_i\underline{A}_i$ and $\mathfrak{I}_i$ corresponds to number of columns of $B_{[0,j]}\mathbb{M}$ selected from the block columns containing the identity matrix $I_k$ in the $i$-th block position.

After permutation of columns, which again do not change the zeroness of the full minors, we obtain
\begin{eqnarray}\label{eq:systematic2}
\left[
\begin{array}{c|c}
I_{k(j+1) } &  \begin{array}{cccc}
C_0 + B_0P_0\underline{A}_0 & B_0 P_1\underline{A}_1  &  &  B_0 P_j\underline{A}_j \\
& C_1 + B_1P_0\underline{A}_1 &  &  \\
&  & \ddots &  \\
&  &  & C_j + B_jP_0\underline{A}_j \\
\end{array}
\end{array}
\right].\\
\underbrace{\hspace*{9.3cm}}_{{\displaystyle =T_j}}\hspace*{0.4cm} \nonumber
\end{eqnarray}
We are going to prove that there exists a unique square submatrix of $T_j$, say $M$, such that $|\, M\,|=0$ and its diagonal entries are in the matrices $T_{s,t}$, where $s,t\in \{0,1,\dots, j\}$.

Let $\hat{\mathbb{M}}$ be the matrix that corresponds to $B_{[0,j]}\mathbb{M}$ after the change of columns, $S_j=\mathfrak{I}_0+\mathfrak{I}_1+\cdots + \mathfrak{I}_j$ and denote by $c_1,c_2, \dots, c_{S_j}$ the indices of the columns of $\hat{\mathbb{M}}$ selected in the first $k(j+1)$ positions of (\ref{eq:systematic2}), and $c_{S_j+1}, \dots, c_{(j+1)k}$ the indices from the remaining columns of $\hat{\mathbb{M}}$. Let $M$ be the square submatrix of $T_j$, built by selecting the columns indexed by $c_{S_j+1}, \dots, c_{(j+1)k}$ and the rows indexed in $\{ 1,2, \dots, (j+1)k\} \setminus \{ c_1,c_2, \dots, c_{S_j} \}$. Therefore,
\[0=|\,\hat{\mathbb{M}}\,|=\pm |\,M\,|\]
by the Laplace expansion over each of the first $S_j$ columns of $\hat{\mathbb{M}}$.

It remains to show that all the diagonal entries of $M$ are in the matrices $T_{s,t}$, where $s,t\in \{0,1,\dots, j\}$, which we will achieve using the index condition (\ref{eq:cond_th3}) on the submatrix $\hat{\mathbb{M}}$.
Recall that $\rho_i= \mathfrak{I}_i + \mathfrak{T}_i$ and write

\begin{equation}\label{Mgeral}
M=\left[\begin{tabular}{c|c|c|c|c}
$M_0$ & &  & &  \\\cline{1 - 1}
\multirow{4}{*}{$O_0$} & $M_1$ &  &  & \\ \cline{2 -1}
& \multirow{3}{*}{$O_1$} & &  & \\ 
 & & $\cdots$ & $M_{j-1}$ &  \\
 &  &  &  & $M_j$ \\\cline{4 - 1}
 &  &  & \multirow{1}{*}{$O_{j-1}$} &
\end{tabular}
\right],
\end{equation}
with $M_i$ having $\mathfrak{T}_i$ columns and $k(i+1)-\sum_{h=0}^{i}\mathfrak{I}_h$ rows, $i=0,1,\dots, j$. Having all the entries of the diagonal of $M$ in the matrices $T_{s,t}$ amounts to saying that the number of rows of each $M_i$ is larger or equal to $\sum_{h=0 }^{i}\mathfrak{T}_h$, \ie,
$$
k(i+1)-\sum_{h=0}^{i}\mathfrak{I}_h \geq \sum_{h=0 }^{i}\mathfrak{T}_h,
$$
for $i=0,1,\dots, j$. But these are exactly the conditions (\ref{eq:cond_th3}). This  shows ($2.\Rightarrow 1.$).

($1.\Rightarrow 2.$) For the converse, let $M$ be a square singular submatrix of $T_j$ of order $\nu$, where $\nu\leq\min\{(j+1)k,(j+1)(n-k)\}$, with its diagonal entries in the matrices $T_{s,t}$, where $s,t\in \{0,1,\dots, j\}$. Suppose that $M$ is formed by the columns $d_1, \dots, d_\nu$ and rows $c_1, \dots, c_\nu$ of $T_j$. Clearly, $M$ can be written in the form (\ref{Mgeral}). Let $\mathfrak{T}_i$ be the number of columns of the matrix $M_i$ and let $\mathfrak{I}_i$ be such that the number of rows in $M_i$ is $k(i+1)-\sum_{h=0}^{i}\mathfrak{I}_h$. The condition on the entries of $M$ implies that
$$
k(i+1)\geq \sum_{h=0}^{i}\mathfrak{I}_h+\sum_{h=0 }^{i}\mathfrak{T}_h,
$$

Let $\rho_i= \mathfrak{I}_i+\mathfrak{T}_i$. Then the $\rho_i$ satisfy the conditions  (\ref{eq:cond_th3}). Let $\hat{\mathbb{M}}$ be the submatrix of the matrix $[I_{(j+1)k}\mid T_j]$ formed by the columns indexed in $\{ 1,2, \dots, (j+1)k\} \setminus \{ c_1,c_2, \dots, c_\nu \}$ and the columns $d_1, \dots, d_\nu$ of $T_j$, and all of the $(j+1)k$ rows. Then, by the Laplace expansion over each of its columns $\{ 1,2, \dots, (j+1)k\} \setminus \{ c_1,c_2, \dots, c_\nu \}$
$|\,\hat{\mathbb{M}}\,|=\pm |\,M\,|=0$.

If we write $C_i^*= C_iB_i^{-1}$,  $\hat{A}_i=B_i^{-1}$,
\begin{equation*}
\begin{split}
A_i = \left[ \begin{array}{cc}
\hat{A}_i & C^*_i   \\
& \underline{A}_i \\
\end{array}
\right]
\end{split}\end{equation*}
and  $A_{[0,j]}=\diag(A_0, A_1, \dots, A_j)$, then after a permutation of columns, $\hat{\mathbb{M}}$ corresponds to a submatrix, say $\mathbb{M}$, of $G_j^c A_{[0,j]}$ satisfying conditions (\ref{eq:cond_th3}). It is easy to see that this submatrix of $G_j^c A_{[0,j]}$ is equal to $G_j^c A^*_{[0,j]}$ for $A^*_i\in \F_q^{n \times \rho_i}$ and $\rho_i$ satisfying condition (\ref{eq:cond_th3}). Hence, if $M$ is singular, then $G_j^c A^*_{[0,j]}$ is also singular, and by Theorem \ref{th:canadianos} statement 1. fails to hold. Therefore ($1.\Rightarrow 2.$).

The equivalence ($2. \Leftrightarrow 3.$) readily follows from the fact that if there is a zero in the diagonal, all the entries to the right and below are also zero and therefore the determinant of such matrix is trivially zero.
%
\end{proof}

\begin{obs}
	Notice that condition 2. implies that all the entries of the matrices $T_{s,t}$, where $s,t\in \{0,1,\dots, j\}$ are nonzero.
\end{obs}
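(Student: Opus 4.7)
The plan is to deduce the observation directly from condition 2 of Theorem \ref{th:main} by specialising it to $1\times 1$ submatrices. For any entry occurring inside a block $T_{s,t}$ with $s\leq t$ and $s,t\in\{0,1,\ldots,j\}$ (the nontrivial, upper block-triangular part of $T_j$), the entry alone constitutes a $1\times 1$ submatrix of $T_j$ whose unique, and therefore diagonal, entry lies in $T_{s,t}$ by construction. The hypothesis of condition 2 is thus satisfied, so this $1\times 1$ submatrix must be nonsingular, which is equivalent to saying that the entry is nonzero.

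Since condition 2 is quantified over every admissible choice of $C_\ell \in \F_q^{k\times(n-k)}$ and of nonsingular upper-triangular matrices $B_\ell \in \F_q^{k\times k}$, $\underline{A}_\ell \in \F_q^{(n-k)\times(n-k)}$ for $\ell=0,1,\ldots,j$, the nonvanishing of every entry of each block $T_{s,t}$ holds uniformly for every resulting $T_j$. Iterating the argument over all positions in all nontrivial blocks $T_{s,t}$ yields exactly the observation.

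The only point to verify is the trivial one that a $1\times 1$ submatrix automatically fulfils the diagonal-containment hypothesis of condition 2: its single entry is simultaneously its only diagonal entry, so choosing it anywhere inside some block $T_{s,t}$ is enough. There is thus no genuine obstacle; the observation is simply the $1\times 1$ instance of the (stronger) condition stated in the theorem, and it has the pleasant practical consequence that nonvanishing of every entry of each transformed block $B_s P_{t-s} \underline{A}_t$ (plus the diagonal $C_\ell$ perturbations) is a necessary condition for the code to be $m$-MSR, often the first sanity check one performs on a candidate encoder.
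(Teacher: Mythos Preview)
Your argument is correct and is precisely the intended one: the paper states this observation without proof because it is exactly the $1\times 1$ case of condition~2, as you explain. There is nothing to add.
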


\subsection{General Convolutional Encoders}

If the convolutional code is not systematic one can easily transform the sliding generator matrix of a nonsystematic encoder in order to apply the conditions of Theorem \ref{th:main}. This fact is straightforward but worth mentioning.
%

Let $G(D)=[S(D) \ \ Q(D)]$ be the generator matrix of $\C$, where
$$
S(D)= \sum^{m}_{i=0} S_i D^i \in \F_{q^M}^{k \times k}[D],
$$
$$
Q(D)= \sum^{m}_{i=0} Q_i D^i \in \F_{q^M}^{k \times (n-k)}[D].
$$
As $\C$ is basic, we can assume without loss of generality that $S_0=I_k$, the identity matrix of size $k$. Further, let
\begin{equation}\label{eq:Laurent_expension}
  S^{-1}(D)Q(D)=\sum_{i=0}^{\infty} P_i D^i \in \F^{k \times (n-k)}(\!(D)\!)
\end{equation}
be the Laurent expansion of $S^{-1}(D)Q(D)$ over the field $\F(\!(D)\!)$ of Laurent series.

It is easy to see that, after a column permutation, the sliding parity-check matrix $G_j^c$, $j=0,1,\dots m$ of $\mathcal{C}$ has the form
\begin{equation*}
  G_j^c=\left ( \begin{array}{ccc}
   \XXX{S}{j} & \vline &
   \XXX{Q}{j}  \end{array}
\right )
\end{equation*}
and using that $S_0=I_k$  we can left multiply $G_j^c$ by the inverse of the first block to obtain
 \begin{equation}\label{eq:nonsyst_tosyst}
  \left ( \begin{array}{ccc}
   I_{k(j+1)}& \vline &
  \XXX{P}{j}
                          \end{array}
\right )
\end{equation}

As these operations do not change the full size minors of $G_j^c$ one can use the representation (\ref{eq:nonsyst_tosyst}) and Theorem \ref{th:main} and check whether $\mbox{$d^j_{\mbox{\rm\tiny SR}}$}(\mathcal{C})  = (j+1)(n-k)+1$ or not. 

We have considered in this paper $j\leq m$ but we note that non-systematic convolutional codes have column distances that grow to a time instant that can be larger than $m$, namely, $L=\left\lfloor \frac{\delta}{k} \right\rfloor +  \left\lfloor \frac{\delta}{n-k} \right\rfloor$, where $\delta$ is the \textit{degree} of the convolutional code defined to be the maximum of the degrees of the determinants of the $k\times k$ sub-matrices of one, and  hence any, generator matrix of $\mathcal{C}$, see \cite{ro99a1} for details. 

\section{Reducing the field size of $m$-MSR codes}\label{constMRP}


In \cite{ma16} a general construction of $m$-MSR convolutional codes was presented. Unfortunately, we were unable to find a general construction of superregular matrices that satisfy the conditions of Theorem \ref{th:main}. We conjecture, based on many examples, that the superregular matrices proposed in \cite{AlmeidaNappPinto2013,al16,ma16}, and presented below, satisfy such conditions and therefore can be used to build systematic $m$-MSR convolutional codes, but we were unable to formally prove it.

In any case, the main problem of all these general constructions is that they require impractically large finite fields. For this reason, most of the optimal constructions of convolutional codes presented over finite fields of reasonable size are found via computer search and limited to small parameters, see for instance \cite{BaYt2018,gl03,HaOs2018,Hutchinson2008}. In this section we present concrete examples of superregular matrices, of given parameters and finite fields, that satisfy conditions of Theorem \ref{th:main} and therefore yield $m$-MSR convolutional codes. The examples presented in \cite{ma16} were built requiring that $G_j^c$ is superregular and remains superregular after certain operations over $\F_q$ whereas in our constructions we only need that the smaller matrix $P_j^c$ is superregular and remains superregular after certain operations over $\F_q$. Therefore, the field sizes obtained improve the ones presented in \cite{ma16}, as Table \ref{texam} shows. The examples are built using the following superregular matrices. 

\begin{table}
\begin{center}
\begin{tabular}{ccccc}
$[n,k,m]$ & Achievable & $\#$ non-trivial & $\#$ Matrices & Achievable \\
& field & minors & $\underline{A}_\ell$ and $B_\ell$ & field in \cite{ma16} \\ \hline
$[2,1,1]$ & $\F_{2^2}$ & $1$ & $1\times 1$ & $\F_{2^{5}}$\\
$[2,1,2]$ & $\F_{2^3}$ & $7$ & $1\times 1$ & $\F_{2^{7}}$\\
$[3,2,1]$ & $\F_{2^4}$ & $5$ & $1\times 4$ & \\
$[3,1,1]$ & $\F_{2^5}$ & $6$ & $4\times 1$ & \\
$[4,2,1]$ & $\F_{2^6}$ & $40$ & $4\times 4$ & $\F_{2^{11}}$ \\
$[3,2,2]$ & $\F_{2^7}$ & $42$ & $1\times 8$ & $\F_{2^{11}}$\\
$[3,1,2]$ & $\F_{2^9}$ & $42$ & $8\times 1$ & $\F_{2^{11}}$\\
$[4,2,2]$ & $\F_{2^{11}}$ & $529$ &  $8\times 8$ & \\
$[5,3,1]$ & $\F_{2^{11}}$ & $136$ &  $4\times 64$ & \\
$[5,2,1]$ & $\F_{2^{12}}$ & $136$ &  $64\times 4$ & \\
$[6,4,1]$ & $\F_{2^{13}}$ & $335$ &  $4\times 4096$ & \\
$[6,2,1]$ & $\F_{2^{14}}$ & $670$ &  $4096\times 4$ & \\
$[6,3,1]$ & $\F_{2^{18}}$ & $634$ &  $64\times 64$ &
\end{tabular}
\caption{Parameters of $m$-MSR convolutional codes obtained by computer search.}\label{texam}
\end{center}
\end{table}


Let $\alpha$ be a primitive element of a finite field $\mathbb F_{q^M}$ with $q^M$ elements and consider $G(D)=  [I_k \ \ P(D) ]$ with $P(D)=\sum^{m}_{i=0}P_i D^i$ where $P_i$, for $0\leq i\leq m$ is equal to
\begin{eqnarray}\label{eq:coeff_SR}
P_i=\left[ \!\!
  \begin{array}{cccc}
    \alpha^{[Ri]} & \alpha^{[Ri+1]} & \cdots & \alpha^{[Ri+n-k-1]} \\
    \alpha^{[Ri+1]} & \alpha^{[Ri+2]} & \cdots & \alpha^{[Ri+n-k]} \\
    \alpha^{[Ri+2]} & \alpha^{[Ri+3]} &    \cdots & \alpha^{[Ri+n-k+1]} \\
    \vdots & \vdots & \vdots \\
    \alpha^{[Ri+k-1]}  & \alpha^{[Ri+k]}  & \cdots & \alpha^{[Ri+n-2]} \\
  \end{array}
\!\! \right]\in \F_{q^M}^{k \times (n-k)},
\end{eqnarray}
where $R=\max\{k,n-k\}$ and where we use the notation $\alpha^{[j]}= \alpha^{q^j}$ to denote the $j$-th Frobenius power of $\alpha \in \mathbb F_{q^M}$.
The next matrix
\begin{equation}\label{eq:slidingP}
  P_j^c= \XXX{P}{j}
\end{equation}
is superregular for all $j\leq m$ if the field size is sufficiently large, see \cite{AlmeidaNappPinto2013} for details. For smaller fields it may not be superregular. Nevertheless, for the parameters $[n,k,m]$ and field size displayed in Table \ref{texam}, this matrix satisfies the conditions of Theorem \ref{th:main} and can be used to construct $m$-MSR convolutional codes.


Table \ref{texam} shows the achievable fields obtained by computation. For all possible matrices $\underline{A}_\ell$ and $B_\ell$ (with $\ell=0, \dots, m$), all the non-trivial minors of $T_m-C$, are not in the base field $\F_2$, where $C = {\rm diag}(C_0, \dots, C_m)$, which implies that all the non-trivial minors of $T_m$ are nonzero. For this reason we did not consider any matrix $C$ in our calculations and therefore optimize our algorithms.


Another possibility in the quest to find optimal constructions of convolutional codes over small fields is to relax the condition of maximum (sum) rank distance and use instead the notion of almost MRD, see \cite{DELACRUZ2018}. This is left for future research.

\section*{Acknowledgement}

The first author is partially supported by the Portuguese Foundation for Science and Technology (FCT-Funda\c{c}\~{a}o para a Ci\^{e}ncia e a Tecnologia), through CIDMA - Center for Research and Development in Mathematics and Applications, within project UID/MAT/04106/2019. The second listed author gratefully acknowledges the support from The Independent Research Fund Denmark (Grant No. DFF-7027-00053B). The third author is partially supported by the the Universitat d'Alacant (Grant No. VIGROB-287) and Generalitat Valenciana (Grant No. AICO/2017/128).


\bibliographystyle{elsarticle-harv}
\bibliography{biblio_com_tudo}

\end{document}